\documentclass[12pt]{article}

\usepackage[margin=1in]{geometry}
\usepackage{setspace}
\usepackage{amsmath,amssymb,amsthm,mathtools}
\usepackage[authoryear,round]{natbib}
\def\endbibitem{}
\usepackage{booktabs}
\usepackage{array}
\usepackage{threeparttable}
\usepackage{graphicx}
\usepackage{float}
\usepackage[section]{placeins}
\usepackage{enumitem}
\usepackage{microtype}
\usepackage{xcolor}
\newenvironment{fullwidthtablenotes}{%
  \par\smallskip
  \noindent\begin{minipage}{\linewidth}%
  \footnotesize\normalfont
  \setlength{\parindent}{0pt}%
  \setlength{\parskip}{0pt}%
  \setlength{\leftskip}{0pt}%
  \setlength{\rightskip}{0pt}%
  \setlength{\parfillskip}{0pt plus 1fil}%
  \textit{Notes:}\enspace\ignorespaces
}{%
  \end{minipage}\par
}

\usepackage[colorlinks,citecolor=blue,linkcolor=blue,urlcolor=blue,
  hypertexnames=false,
  pdftitle={The Race for Elite Destinations: Education Competition and Low Fertility in Korea},
  pdfauthor={Dongwoo Kim}]{hyperref}

\DeclareMathOperator{\Cov}{Cov}
\DeclareMathOperator{\Var}{Var}
\newcommand{\E}{\mathbb{E}}

\newcommand{\1}{\mathbf{1}}
\newcommand{\dd}{\,\mathrm{d}}

\allowdisplaybreaks

\theoremstyle{plain}
\newtheorem{proposition}{Proposition}
\newtheorem{lemma}{Lemma}
\newtheorem{corollary}{Corollary}
\theoremstyle{definition}
\newtheorem{definition}{Definition}
\newtheorem{remark}{Remark}

\renewcommand{\thesection}{\Roman{section}}
\renewcommand{\thesubsection}{\thesection.\Alph{subsection}}
\renewcommand{\thesubsubsection}{\thesubsection.\arabic{subsubsection}}
\newcommand{\vEffect}{0.240}
\newcommand{\vContest}{1.8463}
\newcommand{\vLottery}{2.0862}

\newcommand{\vRaceMoneyPct}{5.0}

\newcommand{\vPointRmse}{0.64}
\newcommand{\vBE}{5.35}

\newcommand{\vPosSharePct}{77.3}

\newcommand{\vFrontier}{2.082}
\newcommand{\vFrontierShare}{98}
\newcommand{\vLotteryProdCostBenchmark}{0.7}
\newcommand{\vLotteryProdCost}{1.6}

\newcommand{\vTransferFive}{0.031}
\newcommand{\vTransferTwo}{0.012}

\newcommand{\vEffectYoung}{0.275}
\newcommand{\vKTYrmseYoung}{0.21}
\newcommand{\vCohortRepro}{76}

\newcommand{\vCEexTime}{16.3}
\newcommand{\vCEexTimeNoShock}{45.7}
\newcommand{\vQualifiedFert}{0.181}
\newcommand{\vCEqualified}{8.8}
\newcommand{\vQualifiedOutput}{0.988}
\newcommand{\vCEtransferFive}{2.5}
\newcommand{\vTaxTransferFert}{0.003}
\newcommand{\vCETaxTransfer}{0.9}
\newcommand{\vTaxTransferRate}{0.76}
\newcommand{\vCEcapEtaOne}{0.4}
\newcommand{\vCEcapLot}{18.0}
\newcommand{\vCEgenOne}{17.2}

\newcommand{\vCEabsSlots}{-19.8}
\newcommand{\vCEphiOne}{-0.1}
\newcommand{\vPhiTotal}{3.9}
\newcommand{\vCEctcHi}{20.2}

\newcommand{\vChildTimeRelease}{1.66}
\newcommand{\vChildTimeShare}{92}
\newcommand{\vDWfull}{1.81}

\title{The Race for Elite Destinations:\\
Education Competition and Low Fertility in Korea\thanks{Kim: Simon Fraser University and Korea University (dongwook@sfu.ca). I gratefully acknowledge support from the Social Sciences and Humanities Research Council of Canada under Insight Grant 435-2024-0322. All errors are my own.}}
\author{Dongwoo Kim}
\date{August 2026}

\begin{document}

\maketitle

\begin{spacing}{1.10}
\begin{abstract}
South Korea has the world's lowest fertility and an intense education race.
Families devote nine percent of lifetime income to education, mostly to private tutoring
with near-zero measured returns. I show that competition for coveted careers
generates an assignment externality: when all families spend more,
the admission bar rises and children become costly. In a quantitative model
calibrated to this mechanism, replacing score-based assignment with a
capacity-preserving lottery raises completed fertility by 0.24 children per
couple. Pronatal transfers deliver one eighth of that increase. Education taxes
deliver almost none. Preference shifts, not a fiercer race, explain the
decline across cohorts.
\end{abstract}
\end{spacing}

\noindent\textit{Keywords:} low fertility; education competition; assignment externality; South Korea.\\
\textit{JEL codes:} J13, I24, I28, D62.

\thispagestyle{empty}
\newpage

\begin{quote}
\small\itshape
``[E]very Korean child came home from grade school, and worked with a tutor
for four full hours in the afternoon and the evening, driven by these
Tiger Moms. Are you surprised when you lose to people like that? Only
if you're a total idiot.''

\upshape\raggedleft -- Charlie Munger, quoted in David Clark, \emph{Tao of Charlie Munger} (2017)
\end{quote}

\section{Introduction}
\label{sec:intro}

South Korea has the world's lowest fertility and an exceptionally intense
education race. Private academies, tutoring, and test preparation figure prominently in family budgets and public
debate. The race is widespread across the income distribution and culminates in competition for admission to selective universities and entry into preferred careers. These pressures have long been described as Korea's ``education fever''
\citep{Seth2002,Sorensen1994,AndersonKohler2013}. The same phenomenon appears across
East Asia, where educational competition coexists with very low
fertility. \citet{FangLiu2026} call this the \textit{Confucian fertility
paradox}: the education tradition ``turns a once pro-natal culture into a powerful
engine of low fertility.'' This paper asks whether the education race itself
raises the cost of children and reduces family size.

Educating a child in Korea absorbs roughly nine percent of cumulative after-tax
family income over the first twenty-five years. Private tutoring accounts for
nearly three quarters of that spending.\footnote{In the Korean Labor and Income
Panel Study data, 77 percent of households purchase private education for their
children in a given year.} Yet the payoff to private-education spending is hard to detect. In
the Korean Education and Employment Panel (KEEP) data, private-education spending
predicts no measurable improvement in college-entrance scores, entry into a
preferred career, or wages at age~29. The estimated returns are too small to
explain spending on this scale. \citet{KTY2024}, hereafter KTY, attribute the
heavy spending to status externalities: parents value
their children's human capital relative to that of other children, so one
family's investment induces others to spend more. I study a different source of
strategic interaction, an \textit{assignment externality}. Education helps determine who
gains access to scarce and highly valued opportunities, from selective
universities to licensed professions and secure careers in large firms and the
public sector. Because access to such positions is limited, when all
families spend more on their children's education, the bar that each child
must clear rises. Shared aspirations for
these opportunities can therefore generate an education race even when families
have no direct taste for rank.

The paper's main contribution is to embed this assignment externality in a quantitative model of fertility, supported by new evidence on the returns to private education, including within-school estimates linking spending to college-entrance scores and early-career wages. Combining these estimates with household spending data separates productive investment from positional spending and measures how the race raises the equilibrium cost of a child. By locating the source of competition in the interaction between shared aspirations and limited capacity, the model expands the policy set beyond cash transfers and corrective taxes to reforms that change the supply of preferred positions or how they are allocated.

I first formalize the education race using an analytical model of an all-pay contest for a
fixed-capacity destination. Each family treats the admission cutoff as fixed and therefore has a private incentive to raise its child's score. When all families spend more on positional effort, the cutoff rises and the allocation does not change. Positional spending is privately valuable but largely dissipated in aggregate, even though families have no utility from rank, relative income, or relative human capital. Competition also raises the cost of family size:
each additional child brings another education bill and requires the family to
enter the contest again. The analytical model does not, however, determine which
policy works best. Changing the admission rule can reduce the return to
positional effort, whereas adding more positions at preferred destinations can either
weaken or intensify the race, depending on how the value of each position changes
as access expands. The effects of education taxes depend on how much spending is
productive. Ranking these policies requires measuring the productive and
positional shares of observed spending and how households respond in equilibrium
when the rules of the contest change.

I therefore embed the contest in a quantitative overlapping-generations model.
Households choose fertility and labor supply together with productive education,
positional effort, and whether to enter the contest. Productive education raises human capital regardless of
destination, whereas positional effort changes admission probabilities without
raising human capital. I calibrate the model to married or cohabiting couples in
the Korean Labor and Income Panel Study (KLIPS) whose female partner was born
between 1970 and 1975. The KLIPS data pin down how much families spend on private
education and how its share of income varies across households. The near-zero
estimated returns to private education in KEEP cap how much of this spending the
model can attribute to productive investment. Together, these moments imply that
most observed private-education spending is positional. Contest outlays absorb
\vRaceMoneyPct\ percent of household income.

The quantitative model shares its household, human-capital, and fertility blocks
with KTY, who show that status externalities can generate excessive education
investment and low fertility. KTY also study a college-admission tournament and
report that total college attendance remains roughly constant across their policy
experiments. They point to capacity and admission rules as relevant policy margins.
\citet{MahlerTertiltYum2025} note that comparison motives can themselves
be read as a reduced-form representation of competition created by scarce
high-quality college places. This paper makes that institutional structure
explicit. In KTY's benchmark, the externality originates in a preference for
relative human capital. Because preferences are hard to change, policy in that
framework works through taxes and transfers that dampen its consequences. Here
families have no relative rank concern. The externality arises when widely shared
aspirations meet a fixed-capacity, score-based allocation rule. Because its
source is institutional, the model can compare assignment
reform with capacity expansion, qualified lotteries, and education taxes.
KTY's tournament extension keeps a single education input
that both builds human capital and improves admission. Here the two inputs are
separate choices and entry into the contest is endogenous.

These differences matter for policy. In the main counterfactual experiment, I
replace score-based assignment with a lottery that leaves the admission share
and value of preferred destinations unchanged.
Productive education remains valuable, but positional spending no longer
affects who receives those destinations, and completed fertility rises by
\vEffect\ children per couple. Random assignment discards some productive
sorting, but output per unit of labor falls by only
\vLotteryProdCostBenchmark\ percent relative to the current system. The loss is
small because the positional outlay builds no human capital and productive
education rises slightly after those resources are released. Transfers and
education taxes deliver much smaller gains. A balanced-budget per-child
transfer equal to five percent of the average wage raises fertility by only
\vTransferFive\ children per couple, about one eighth of the lottery effect.
A tax on total private-education spending, and even an infeasible oracle tax
that targets positional outlays alone, have almost no effect once their revenue
is rebated. Adding positions at preferred
destinations has no uniform effect: broader access can attract new contestants,
while the value of each position may fall.

The lottery also raises welfare after accounting for this productivity cost.
With the admission share held fixed, the partial-accounting measure that excludes
the unpriced child-time term delivers an average gain equivalent to \vCEexTime\
percent of consumption. The gains are largest among lower-income households.
Households in the top quintile lose because their resources bought a higher
admission probability under score-based assignment.

Education competition can keep fertility low without driving its continued
decline. The fall in completed fertility across cohorts raises a second
question: did the education race intensify across generations, or did desired
family size change? I examine this question using completed fertility for women
born in 1970--75 and 1976--81. Holding the contest and
education technology at their older-cohort values, I recalibrate preferences
and heterogeneity for the younger cohort. This specification reproduces
\vCohortRepro\ percent of the decline in completed fertility between the two
cohorts. The younger cohort draws less utility from children at every family size. A specification that instead allows only the contest and
education technology to change has an RMSE more than forty times as large and implies a
weaker, not stronger, contest. Competition nevertheless remains quantitatively
important: the same lottery would raise completed fertility in the younger
cohort by \vEffectYoung. The comparison points to a generational shift in
family-size preferences: the education race depresses fertility in both
cohorts, while preference changes account for most of the decline between them.

The mechanism connects the quantity--quality tradition
\citep{Becker1960,BeckerLewis1973,BeckerTomes1976} to models of costly competition
for fixed prizes
\citep{Akerlof1976,Lazear1981,FrankCook1995,BodohCreedHickman2018}. Standard
fertility models take the price of child quality as given. Here the cost of
improving a child's chance of reaching a preferred destination rises with other
families' spending, and the number of potential contestants is itself a fertility
choice. This interaction links the escalation in parental investment documented
by \citet{RameyRamey2010} and the intensive-parenting choices studied by
\citet{DoepkeZilibotti2017} to fertility, extending quantitative models in which
households solve their education problem in isolation
\citep{deLaCroixDoepke2003,JonesEtAl2010,DoepkeEtAl2023}. Two recent papers
study the admissions race directly. \citet{MahlerTertiltYum2025} use a
one-child college-admission model to show how fixed college positions can make application
preparation socially wasteful and raise the cost of children, while
\citet{Kang2026} estimates a dynamic Korean admissions tournament and quantifies
competition-driven tutoring. The first holds family size fixed in its admissions
example, and the second does not study fertility. This paper embeds the
admissions tournament in a fertility model and measures how its resource cost
changes completed fertility and the ranking of policy reforms.
In the Korean context,
the paper brings together evidence on education competition, family institutions,
and fertility \citep{AndersonKohler2013,HongEtAl2016,MyongParkYi2021} and gives
quantitative content to the conjecture in \citet{FangLiu2026} that reforming
educational pathways can achieve what cash transfers cannot.

This paper is organized as follows. Section~\ref{sec:empirical} presents the
empirical evidence. Section~\ref{sec:analytical} develops the analytical contest
model. Section~\ref{sec:model} presents the quantitative model.
Section~\ref{sec:calibration} calibrates the model for the benchmark cohort
(1970--75), then reports and decomposes the main counterfactual. Section~\ref{sec:policy} compares policy
experiments and evaluates welfare. Section~\ref{sec:cohorts} studies cohort
change, and Section~\ref{sec:conclusion} concludes. The Supplemental Appendix
contains the data construction, proofs, model and computational details, robustness
results, and supplementary welfare analysis.

\section{Fertility, Education, and Preferred Destinations}
\label{sec:empirical}

This section assembles the empirical foundation of the model: what raising a
child costs, what families compete for, and what the competitive spending
measurably returns. Section~\ref{sec:fertility_education} documents that
fertility is low, falling across cohorts, and mildly increasing in income,
while education investment is nearly universal, proportionally heaviest for
low-income families, and paid in the child's time as well as money.
Section~\ref{sec:preferred_destinations} identifies the prize: a narrow set of
preferred career destinations whose value current wages understate.
Section~\ref{sec:keep_returns} estimates the return to that spending and finds
no measurable return in examination scores, destination entry, or wages after
conditioning on income and achievement.

The household evidence comes primarily from the Korean Labor and Income Panel
Study (KLIPS), which I use through wave 27 (2024). Following KTY, the
benchmark cohort consists of married or cohabiting couples in which the woman
was born between 1970 and 1975 and appears at ages 40--43 in at least three
waves; the younger cohort applies the same construction to women born between
1976 and 1981. The 2024 KLIPS cross-section measures preferred-destination
population shares among adults ages 30--45 and wage premiums among wage workers
ages 25--54. The Korean
Education and Employment Panel (KEEP) provides a separate cohort for
estimating the return to private education. The Private Education Expenditures
Survey and the Korean Time Use Survey provide national spending, participation, and
study-time evidence; official admissions, workplace-preference, and youth
labor-force statistics provide the remaining institutional evidence.
Supplemental Appendix~A documents the sample rules and variable construction,
reproduces KTY's published moments under their original conventions, and reports
the KEEP sample-selection checks.

\subsection{Fertility and Education Spending}
\label{sec:fertility_education}

Table~\ref{tab:cohortdata} reports the harmonized cohort moments. Completed
fertility counts children observed by age 43, and permanent income averages
CPI-adjusted household income over ages 40--43. The education moments follow
the child's life cycle, using every observation in which a woman from
the relevant birth cohort has a resident child aged 0--24.\footnote{Restricting mothers to ages
40--43 would select only one part of the child's education life cycle.} The
table notes and the Supplemental Appendix give the remaining definitions.

Completed fertility falls from 1.873 in the 1970--75 cohort to 1.786 in the
1976--81 cohort. The decline is concentrated at two margins: the two-child
share falls by 5.7 percentage points and childlessness rises from 3.3 to 6.9
percent, while the one-child share increases more modestly and the
three-or-more share is stable. Fertility increases mildly with permanent income. In the benchmark
cohort it rises from 1.77 in the bottom quintile to 1.99 in the top, an elasticity
of 0.079, while childlessness falls from 6.5 to 1.3 percent; the younger
cohort's gradient is flatter, rising from 1.65 to 1.83 with an elasticity of
0.068. Both cohort gradients use completed fertility top-coded at three. The positive
gradient suggests that child costs weigh more heavily on lower-income households.\footnote{Models in which child costs are dominated by the time cost of parenting predict a negative income gradient of fertility \citep{deLaCroixDoepke2003}, while \cite{BaudinDeLaCroixGobbi2015} show that a goods-cost channel (a minimum consumption requirement for childbearing) works in the opposite direction at the extensive margin. In Korea, the gradient is positive at both margins, suggesting that the goods-cost channel dominates, consistent with large quasi-fixed expenditures on private education.}

\begin{table}[htbp]
\centering
\small
\caption{Harmonized cohort moments}
\label{tab:cohortdata}
\begin{threeparttable}
\begin{tabular}{lcc}
\toprule
 & 1970--75 & 1976--81 \\
\midrule
Households & 770 & 1,129 \\
Childless share & 0.033 & 0.069 \\
One-child share & 0.203 & 0.220 \\
Two-child share & 0.625 & 0.568 \\
Three-or-more share & 0.140 & 0.143 \\
Completed fertility & 1.873 & 1.786 \\
Work hours & 0.292 & 0.278 \\
Total education spending/income & 0.088 & 0.082 \\
Fertility--income elasticity & 0.079 & 0.068 \\
Total education spending--income elasticity & 0.659 & 0.627 \\
Bottom-quintile childlessness & 0.065 & 0.097 \\
Income Gini & 0.263 & 0.241 \\
\bottomrule
\end{tabular}
\end{threeparttable}
\begin{fullwidthtablenotes}
KLIPS through wave 27 (2024). Fertility, hours, and income moments use couples with the wife aged
40--43 and at least three observations. The two education moments use the whole
birth cohort with children aged 0--24. Work hours are
spouse-average weekly hours divided by 100. The younger cohort's education
moments are right-truncated: women born in 1976--81 who are observed with
secondary-school-age children gave birth unusually early.
\end{fullwidthtablenotes}
\end{table}

Education investment is nearly universal, and it is not receding. The
national evidence comes from the Private Education Expenditures
Survey, conducted jointly by Statistics Korea and the Ministry of Education,
which has measured per-student participation annually since 2007. In its 2019
round, three quarters of students received some form of private education, and
participation in academic subjects remained near 58 percent through high
school, as KTY report in their Table 1. At the high-school level, some participation in arts
and athletics also reflects preparation for specialized admissions, so the
reach of admission-oriented private education is broader than academic-subject
participation alone. Participation fell from 77.0 percent in 2007 to 67.8 percent
in 2016, dropped to 66.5 percent during the pandemic in 2020, recovered within
a year, and rose to a record 80.0 percent in 2024
\citep{StatisticsKoreaPEES2025}. Whatever drove the
decline in the early 2010s, the latest observed participation rate is the
highest in the series.

The KLIPS sample shows the same reach at the household level.
Among benchmark-cohort households with a resident child aged 0--24, 76.6
percent report positive private-education spending, rising from 57.2 percent in the
bottom permanent-income quintile to 85.0 percent in the top.

The burden is large over a child's life, and it falls with income. Lifetime
education spending equals 8.8 percent of after-tax household income, with
spending and income summed from birth through age 24. Following
KTY, I measure this in one-child households, where the expenditure
is unambiguously attributable to a single child. Private education is 72.8 percent
of that total, 6.4 percent of income; the remainder is school fees and other
public-schooling outlays. Because publicly provided education is common across
households, the model targets the household-specific private component. The
elasticity of total education spending with respect to income is 0.659, well below one. Across household-years in the education sample,
the annualized spending-to-income ratio averages 20.9 percent in the bottom
permanent-income quintile and 12.8 percent in the top. Lower-income
families participate widely and devote the largest share of income to education,
a pattern that can reinforce the positive fertility--income gradient.

The race also consumes the child's own time. In the 2024 Korean Time Use Survey,
children aged 10--18 spend 18.2 hours per week in out-of-school study (private-academy
instruction accounts for 9.1) and a further 6.4 hours traveling to and from study,
on top of 25.5 hours of school instruction. These hours rise strongly with
household income, from 13.7 hours per week in low-income households to 25.9 in the
top band, and rose about 17 percent between 2004 and 2024 while fertility fell.
Per-child study time is nearly flat in the number of children in the household
(18.5, 18.6, and 17.4 hours for one, two, and three or more minors), so in the time
dimension the household burden scales with family size rather than being divided
among siblings. The Supplemental
Appendix records the construction of these statistics.

\subsection{Preferred Destinations}
\label{sec:preferred_destinations}

Preferred destinations are the jobs toward which Korean education and employment
preparation are commonly directed. The composite category
contains five mutually exclusive components: licensed professions, covering
selected medical, legal, and accounting occupations; university and
school teachers; regular government employment; regular employment in public
institutions and public enterprises; and regular employment in large private or
foreign firms. Academy instructors and irregular public-sector workers are excluded.
These jobs provide the attributes Korean society
prizes in a career: high pay in the professions and large firms, exceptional
job security in the public sector and teaching, and social standing
throughout. They are also scarce. Entry is controlled by competitive gates:
national licensing examinations for the professions, certification examinations
for teachers, entry examinations for government and public institutions, and
the graduate recruitment programs of large firms. The
preference has deep roots: government office carried social prestige through
centuries of dynastic civil-service examinations, and the modern examination
routes inherit that standing \citep{Sorensen1994,Seth2002}.
Selective universities are intermediate destinations on the path to these careers.
I therefore measure preferred destinations at the career stage and, in the
model, compress the intervening admissions, examinations, and recruitment
gates into a single score-based contest.

The demand response to the 2024 expansion of medical-school places shows how
widely the aspiration for these destinations is shared. The nationwide medical-school intake
had been fixed at 3,058 students per year for almost two decades. When the
government announced 2,000 additional annual places in February 2024,
high-school graduate applicants for the College Scholastic Ability Test (CSAT),
the national university entrance examination, rose to 161,784, or 31.0 percent
of all registrants, the highest number since the 2004 academic year
\citep{KICE2024}. At the same time, 89 private academies marketed 136
medical-track programs for elementary-school students \citep{Asiae2024}, and
private-education spending reached a record 29.2 trillion won despite a shrinking
student population. Spending grew fastest in the nonmetropolitan regions whose
residents qualify for regional admission quotas at local medical schools
\citep{StatisticsKoreaPEES2025}. The government
restored the intake to 3,058 for 2026 \citep{MOE2025}. One gate widened, and
demand across the country pivoted to it within months.

Table~\ref{tab:destination} reports each component's population share and
conditional wage premium; Supplemental Appendix~A gives the underlying
occupation and workplace codes. In the KLIPS 2024 cross-section,
the composite destination accounts for 18.45 percent of adults ages 30--45. Regular large or foreign firms are
the largest component at 8.77 percent, followed by government and public
institutions at roughly 3 percent each, teachers at 2.42 percent, and licensed
professions at 0.94 percent. The conditional current-wage premium varies sharply across components. Licensed
professions and large firms have premiums of 0.428 and 0.317 log points; teachers,
government, and public institutions sit between 0.036 and 0.103. A model that
valued these destinations only through current wages would therefore predict much
weaker demand for the public and teaching components than their observed
popularity suggests.\footnote{The estimates are close to independent evidence:
using graduate employment records linked to business-group registries,
\citet{KwonLee2019} estimate premiums of 0.228--0.264 for large business groups
among college-graduate youth, and the composite premium of 0.257 lies within
that range.} Demand for these destinations, including the low-premium components, is widespread: in the
2025 Social Survey, 63.1 percent of young respondents preferred large firms,
public enterprises, or government institutions as workplaces
\citep{StatisticsKoreaSocial2025}, and among the 565,000 nonemployed young people
preparing for employment examinations in 2024, 23.2 percent were preparing for the
civil service \citep{StatisticsKoreaYouth2024}. Current wages are therefore an incomplete
measure of what households are pursuing.

\begin{table}[htbp]
\centering
\small
\caption{The composite preferred destination}
\label{tab:destination}
\begin{threeparttable}
\begin{tabular}{lccc}
\toprule
Component & Population share (\%) & Log-wage premium & Sample size \\
\midrule
Licensed professions & 0.94 & 0.428 & 48 \\
University and school teachers & 2.42 & 0.103 & 207 \\
Regular government employment & 3.28 & 0.036 & 248 \\
Regular public institutions & 3.04 & 0.063 & 223 \\
Regular large or foreign firms & 8.77 & 0.317 & 733 \\
\midrule
Composite destination & \textbf{18.45} & \textbf{0.257} & 1,459 \\
\bottomrule
\end{tabular}
\end{threeparttable}
\begin{fullwidthtablenotes}
All statistics are measured in the KLIPS wave 27 (2024) cross-section.
Population shares use adults ages 30--45 and cross-sectional weights.
Wage-premium regressions use wage workers ages 25--54 and control for age,
sex, and education. The composite premium is 0.257 with a
standard error of 0.014 and a 95 percent confidence interval of
$[0.230,0.285]$. Professional self-employment is included.
\end{fullwidthtablenotes}
\end{table}

The missing value is concrete: job security, pensions and other deferred
compensation, predictable career progression, working conditions, occupational
licensing, and the option value of future careers. Some of these attributes are
pecuniary over the life cycle even though they do not appear in current wages.

\subsection{The Measured Return to Private Education}
\label{sec:keep_returns}

Finally, I investigate whether private-education spending has a measurable return in the outcomes KEEP observes, once parental
income and achievement are held fixed. I use KEEP to estimate the return to private education at three stages: the college-entrance exam score, entry into a
preferred destination, and wages at age~29. KEEP follows a cohort of students
who were high-school seniors in 2004. The household survey records private-education spending, income, and a
teacher assessment of achievement in 2004. Linked files provide the 2005
college-entrance examination score and employment outcomes in 2015.

\begin{table}[htbp]
\centering
\footnotesize
\setlength{\tabcolsep}{6pt}
\caption{Returns to private-education spending, KEEP 2004 cohort}
\label{tab:keep_returns}
\begin{threeparttable}
\begin{tabular}{lccc}
\toprule
& \shortstack{CSAT\\percentile} & \shortstack{Composite\\destination (logit)} &
\shortstack{Log wage\\at age 29} \\
\midrule
No controls & 2.071 & 0.090 & 0.004 \\
            & (0.344) & (0.041) & (0.008) \\
+ log household income & 1.107 & 0.049 & $-0.005$ \\
            & (0.400) & (0.048) & (0.009) \\
+ teacher achievement rating & 1.539 & 0.027 & 0.004 \\
            & (0.384) & (0.059) & (0.011) \\
+ region fixed effects & 1.252 & 0.018 & 0.003 \\
            & (0.380) & (0.064) & (0.011) \\
+ school fixed effects & \textbf{0.285} & -- & \textbf{0.002} \\
            & (0.323) & & (0.013) \\
\midrule
Observations (final specification) & 1{,}220 & 670 & 574 \\
\bottomrule
\end{tabular}
\end{threeparttable}
\begin{fullwidthtablenotes}
Coefficients on log monthly household private-education
spending. Robust standard errors in parentheses. The destination column is a logit for
employment at a firm with 300 or more employees or in government or a public
institution, observed in 2015. School fixed effects are not reported for the
destination column because the within-school variation in that binary outcome is too
thin at this sample size.
\end{fullwidthtablenotes}
\end{table}

Without controls, spending strongly predicts the CSAT percentile. Household
income cuts the coefficient roughly in half; adding the teacher achievement
rating partly restores it, to 1.539. School fixed effects reduce it further to
0.285, less than one quarter of
the region fixed-effects estimate. With a standard error of 0.323, the estimate
is not statistically distinguishable from zero. At the point estimate, doubling
monthly spending from the sample median of 350{,}000 won corresponds to a 0.20
percentile-point difference. Even the upper end of the ninety-five percent
interval corresponds to only 0.64 percentile points.
School fixed effects provide the relevant comparison because high-school
admission remained selective for about a third of this cohort: 35 of its 100
schools, enrolling 35.6 percent of the sample, admitted students by
examination when the cohort entered high school in 2002. A within-school teacher rank does not put students from selective
and non-selective schools on a common achievement scale. Without school effects,
sorting of high-spending families into high-scoring schools remains in the
spending coefficient.

The within-school variation remains substantial. The standard
deviation of monthly spending is 460{,}000 won within schools and 261{,}000 won
across school means. Only 16 percent of the variance in log spending lies between
schools. The teacher rating is itself a within-school rank, whereas 25 percent of
CSAT variation lies between schools. School effects absorb that outcome component
while preserving 92 percent of the variation in spending. The coefficient falls
more than fourfold and its standard error declines from 0.380 to 0.323. The split by
admission regime confirms that sorting matters. Estimated separately, the raw spending--CSAT
coefficient is $0.013$ (s.e.\ 0.476) in equalization regions, where students
are assigned to high schools by residential lottery, and $2.162$ (s.e.\ 0.624)
in selective-admission regions; adding school effects cuts the
selective-region estimate to $0.621$ (s.e.\ 0.457). The pattern points to
sorting into higher-scoring schools rather than a return to spending.

The destination and wage estimates are already close to zero after controlling
for income and achievement. The logit slope of preferred-destination entry with respect to log spending is
$0.027$ (s.e.\ 0.059). The wage elasticity is $0.004$ (s.e.\ 0.011) and becomes
$0.002$ (s.e.\ 0.013) with school effects. Its ninety-five percent interval is
$[-0.023,\,0.026]$, implying an earnings difference of less than two percent in
either direction for a doubling of spending.

The Korean literature also finds modest effects. \citet{Kang2012}
instruments subject-level tutoring expenditure among middle-school students and
estimates that a ten-percent increase in spending raises the subject score by at most
0.75--1.28 percent, depending on the subject. \citet{KuLeeKim2024} report positive effects on
college-entrance scores and interpret them as upper bounds because their
cross-subject instrument may retain common ability and parental support. Their
design controls for school type and residential location. My comparable region fixed-effects
estimate implies 0.87 percentile points for a doubling of spending; adding school
fixed effects lowers it to 0.20. Evidence on adult outcomes points in the same direction. In an independent
analysis of the same KEEP cohort, \citet{KimSeik2019} finds no significant
wage association with private tutoring after conditioning on college-entrance
performance. \citet{deSilva2021} uses staggered regional curfews on late-night tutoring as
instruments for private-tutoring spending. The average effect on college entry is close to zero.
Positive effects are concentrated among students whose
parents have less education, and returns decline where spending is already high.

The evidence calls for a quantitative model
with productive and positional inputs, a capacity-constrained destination,
value beyond the contemporaneous wage, and noisy transmission of family
advantage. Before introducing that structure, the next section isolates the
assignment externality generated by a score-responsive cutoff.

\section{The Assignment Externality}
\label{sec:analytical}

\label{sec:analytical_fertility}
This section builds the smallest environment that contains the assignment
externality: parents choose fertility, productive education, and positional
effort for each child, and a fixed share of children receives the preferred
destination by score ranking. Each family takes the admission cutoff as given,
so effort that raises the score carries a positive private return. Yet capacity
clearing moves the cutoff one-for-one with the common component of that effort,
leaving assignments unchanged. The effort
is privately rational and collectively dissipated, although nothing in
preferences rewards rank or relative standing.
A closely related fixed-capacity example appears in
\citet{MahlerTertiltYum2025}, where socially wasteful application preparation
raises the signal used to allocate a fixed number of college places. Their
example holds family size fixed. Here fertility and productive education are
choices, and the quantitative model adds noisy assignment and endogenous contest
entry. These margins make the cost of the admissions race part of the decision
to have another child.

A parent has resources $y$ and chooses the number of children $n$. The parent
chooses the same productive education $x\geq0$ and positional effort $e\geq0$
for each child, with a strictly convex per-child cost of effort.\footnote{Convexity is the
canonical tournament assumption \citep{Lazear1981}, and it is what sustains an
interior equilibrium: with a linear cost, a prize of this size would push
effort to corners rather than to the interior spending distribution documented
in Section~\ref{sec:empirical}. The measured score technology points the same
way: doubling spending moves the examination score by 0.2 percentile points,
so buying additional score is already expensive at observed levels.} Consumption is
\begin{equation}
c=y-nx-nC(e),
\qquad C(0)=C'(0)=0,\quad C'(e)>0,\quad C''(e)>0
\quad\text{for }e>0.
\label{eq:simple_budget}
\end{equation}
Productive education raises human capital through $h(x)$, with $h'>0$ and
$h''<0$; positional effort does not enter human capital. A child's contest score
is $z=a+\theta\log h(x)+g(e)+\varepsilon$, where $a$ is an exogenous family
endowment, $\theta>0$, $g'>0$, and $\varepsilon$ has a continuously differentiable
distribution $F_\varepsilon$ and density $f_\varepsilon$. A child receives a
preferred destination if $z\geq v$, so the admission probability is
\begin{equation}
p=1-F_\varepsilon\!\left(v-a-\theta\log h(x)-g(e)\right).
\label{eq:simple_probability}
\end{equation}
The destination raises log income by $\delta$ and carries an additional absolute
value $B$; write $\Delta\equiv\delta+B>0$ for the total value to the parent.
Conditional on parity (the number of children), the parent maximizes
\begin{equation}
u(c)+\Phi(n)+\beta n\bigl\{u_c(h(x))+\Delta\,p\bigr\},
\label{eq:simple_objective}
\end{equation}
where $u$ is increasing and strictly concave, $\Phi(n)$ collects the direct
value of parity, $u_c$ is the parent's valuation of the child's human capital
with $u_c'>0$ and $u_c''\leq0$, and $\beta>0$ weights child outcomes. The additive
form of the prize is exact when that valuation is logarithmic: a proportional
income premium then adds the same increment $\delta$ for every family, so it
sums with $B$. Nothing in preferences
depends on other children's outcomes. The cutoff $v$
clears a fixed capacity: the share of children admitted equals $q\in(0,1)$.

Taking $v$ as given, an interior effort choice satisfies
\begin{equation}
u'(c)\,C'(e)
=\beta\Delta\,
f_\varepsilon\!\left(v-a-\theta\log h(x)-g(e)\right)g'(e).
\label{eq:effort_foc}
\end{equation}
The number of children cancels because both the benefit and the cost are per
child. Suppose families share the same state and make the same choices. Capacity
clearing then requires
\begin{equation}
v-a-\theta\log h(x)-g(e)=F_\varepsilon^{-1}(1-q).
\label{eq:symmetric_cutoff}
\end{equation}

\begin{lemma}[Cutoff absorption]
\label{lem:absorption}
In a symmetric capacity-constrained equilibrium, any common increase in the
positional-score component $g(e)$ is offset one-for-one by the equilibrium cutoff
and leaves every child's assignment probability unchanged.
\end{lemma}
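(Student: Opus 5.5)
The plan is to work directly from the capacity-clearing condition \eqref{eq:symmetric_cutoff}. In a symmetric equilibrium every family has the same endowment $a$, productive education $x$ and effort $e$. Every child therefore faces the same admission probability
\[
p = 1 - F_\varepsilon\bigl(v - a - \theta\log h(x) - g(e)\bigr),
\]
and the admitted share equals this common $p$. Capacity clearing sets $p = q$, which gives \eqref{eq:symmetric_cutoff}.

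The first step is to solve for the cutoff explicitly:
\[
v^{*} = a + \theta\log h(x) + g(e) + F_\varepsilon^{-1}(1-q).
\]
This requires $F_\varepsilon^{-1}(1-q)$ to be well defined. Since $q\in(0,1)$ and $F_\varepsilon$ is continuous, a point with $F_\varepsilon = 1-q$ exists. If $F_\varepsilon$ is not strictly increasing there, I would take any point of the level set; the argument below only uses the value of $F_\varepsilon$ at the argument, so the choice does not matter.

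The second step is the comparative static. Consider a common shift in which every family's positional score component changes from $g(e)$ to $g(e)+\Delta_g$, holding $a$ and $x$ fixed. This can come from a change in $e$ or from a shift of the function $g$ itself. The new clearing cutoff is $v^{*}+\Delta_g$, so $\partial v^{*}/\partial g = 1$; this is the one-for-one offset. The argument $v - a - \theta\log h(x) - g(e)$ is unchanged and still equals $F_\varepsilon^{-1}(1-q)$. Hence each child's probability stays at $1-F_\varepsilon(F_\varepsilon^{-1}(1-q)) = q$, and the realized assignment rule $z\geq v$ selects the same children for every realization of $\varepsilon$, because $z - v$ is unchanged pathwise.

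The third step addresses uniqueness, which is the only real subtlety. Another cutoff could also clear the market if $F_\varepsilon$ is flat at level $1-q$. Even then, any clearing cutoff yields $p=q$ for every child. So the statement about assignment probabilities holds regardless of which clearing cutoff is selected, and the one-for-one offset holds for the selection rule that shifts the level-set point by $\Delta_g$.

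I would also remark that the argument treats $x$ as fixed, so the lemma is a statement about the positional component alone. It does not depend on the first-order condition \eqref{eq:effort_foc}: absorption holds whatever effort level families choose.
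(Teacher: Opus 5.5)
Your proposal is correct and matches the paper's proof. Both solve the capacity-clearing condition for the cutoff, shift it by $g(\widetilde e)-g(e)$ so that the standardized margin stays at $F_\varepsilon^{-1}(1-q)$, and conclude that every admission probability remains $q$. Your additional remarks on a possibly flat $F_\varepsilon$ (non-unique clearing cutoffs) and on the pathwise invariance of $z-v$ are valid refinements that the paper's shorter argument leaves implicit.
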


\emph{Proof.} Capacity clearing fixes the standardized margin at
$F_\varepsilon^{-1}(1-q)$ in equation~\eqref{eq:symmetric_cutoff}. After a common
change from $e$ to $\widetilde e$, the cutoff
$\widetilde v=v+g(\widetilde e)-g(e)$ restores the same margin for every family,
so every admission probability, and therefore aggregate admissions, is
unchanged. \qed

\medskip

An individual family nevertheless treats the cutoff as fixed. When additional
effort can improve the child's admission chance, a valuable destination induces
positive positional effort: at zero effort, the marginal resource cost is zero
while the marginal private assignment benefit is positive. A planner who preserves
assignments instead sets the common component of this effort to zero because it
uses resources without raising human capital or changing who is admitted.
Lottery-based assignment likewise eliminates the return to positional effort
and sets $e=0$, while productive education remains valuable because it raises
human capital directly. Supplemental Appendix~B.2 states and proves these
results.

The contest can affect fertility through two conceptually distinct channels.
Positional outlays raise the resource cost of every child, while lower per-child
inputs in larger families may reduce each child's assignment probability. Writing
$W_n=np_n$ for the expected number of children receiving the preferred
destination, each additional child adds exactly $q$ to $W_n$ under
lottery-based assignment; the contest therefore lowers the destination-value
contribution to the $n+1$ margin whenever $W_{n+1}-W_n<q$.\footnote{Supplemental
Appendix~B.4 derives the exact condition. The expected number of admitted
children falls with parity only if the per-child admission probability falls
below the fraction $n/(n+1)$ of its previous value: going from one child to two,
the probability would have to fall by more than half.} The analytical
model does not determine which of the two channels is larger, and their
relative size matters for measuring policy effects.

The contest operates only under interior scarcity. Under the additive score
used here, equilibrium positional effort and its resource cost vanish as the
admission share approaches zero or one. Under the quantitative model's
normalized logistic rule, the private marginal return to the score vanishes as
the admission share approaches $q_{\min}$ or one; $q_{\min}$ is close to zero at
the calibrated noise scale. Supplemental Appendix~B.3 derives these capacity
limits. Away from the boundaries, the effect of capacity expansion is
ambiguous because destination value may change with supply. Lottery assignment
is the exception because it
operates directly on the score--allocation link isolated by
Lemma~\ref{lem:absorption}. The quantitative model measures these channels and
compares the policies.

\section{Quantitative Model}
\label{sec:model}

The quantitative model embeds the assignment externality of
Section~\ref{sec:analytical} in an economy built to confront the facts of
Section~\ref{sec:empirical}. It is a stationary overlapping-generations model
with heterogeneous parents who choose fertility, labor supply, productive
education, and positional effort. A fixed share of children is assigned to the
preferred destination through a score contest. A generation lives for one period as an adult. Adults
supply labor, choose a family size, and invest in their children. Children become the
next generation of adults. The stationary distribution is therefore determined by
a fertility-weighted transition operator. An adult enters the period with earnings capacity $h$ and a persistent family-ability
state $\kappa_p$. The state $h$ incorporates the adult's productive human capital
and the wage consequence of the destination received in childhood. The state
$\kappa_p$ governs the conditional distribution of child ability. Read broadly,
the pair stands for permanent family advantage rather than current income alone:
parental education, networks, wealth, and stable employment all shape the
child's prospects in the contest. The joint
distribution at the beginning of the period is $\mu(h,\kappa_p)$.

The within-period timing is:
\begin{enumerate}[label=(\roman*),leftmargin=2em]
\item the adult observes $(h,\kappa_p)$ and the idiosyncratic fertility-choice
shocks;
\item the adult chooses fertility $n\in\{0,1,2,3\}$, accounting for the
expected value of the subsequent participation choice;
\item conditional on $n$, the participation shock is realized and the adult
chooses contest participation $r$, productive education $x$, positional effort $e$, and labor supply
$\ell$;
\item child ability $\kappa'$ is drawn from the transition matrix $\Pi_\kappa$,
where $\Pi_\kappa(\kappa'|\kappa_p)$ is the probability of ability state $\kappa'$
conditional on parental family-ability state $\kappa_p$; the child's human capital
and contest outcome are then determined, and the child enters the next adult
generation.
\end{enumerate}
Education and effort are chosen before child ability is realized, so parents
condition investment on the ability distribution implied by their family
state.

\subsection{Preferences and the Household Problem}
\label{sec:household_problem}

For each parity $n$ and contest-participation state $r\in\{0,1\}$, let
$V_{nr}(h,\kappa_p;v,T)$ denote utility after the household optimizes productive
education $x$, positional effort $e$, and labor supply $\ell$, taking as given
the assignment cutoff $v$ and lump-sum transfer $T$. Participation
means purchasing access to the positional input, not eligibility for a
destination: all children remain in the assignment pool. An entrant pays the
fee $F$ and may choose $e\geq0$; a non-entrant sets $e=0$ but may still invest
in productive education. The branch value is
\begin{align}
V_{nr}(h,\kappa_p;v,T)
=\max_{x,e,\ell}\quad&
\log\!\left(\frac{1.5c_{nr}}{1.5+0.3n}\right)
+\nu\frac{(1-\ell-n\lambda)^{1-\gamma}-1}{1-\gamma}
+\Phi(n) \notag\\
&+\beta n\,\E_{\kappa',D_E|\kappa_p,x,e,v}
\left[
u_c\!\left(y'(x,\kappa',D_E)\right)+B_E D_E
\right]
- n\,\psi\,\frac{\big(x+e\big)^{1+\varphi}}{1+\varphi},
\label{eq:quant_utility}
\end{align}
where $\nu$ and $\gamma$ are the leisure weight and curvature, $\psi$ scales
the child-time cost, $\beta$ is the weight on child outcomes, $D_E$ indicates
receipt of the preferred destination, and $B_E$ is the common value of
destination attributes beyond the current wage. The function
$u_c(y)=(y^{1-\eta_c}-1)/(1-\eta_c)$ is the parent's valuation of child income.
Both education inputs use instructional hours, so the final term prices the
child's time as a function of $x+e$. The curvature $\varphi$ takes the canonical
quadratic value; the Supplemental Appendix reports sensitivity to this choice.

The equivalence scale assigns weight 1 to the first adult, 0.5 to the second
adult, and 0.3 to each child; $\lambda$ is the common time requirement per
child. Direct utility from family size is a free parity-specific level,
\begin{equation}
\Phi(0)=0,\qquad
\Phi(n)=\phi_n,\quad n\in\{1,2,3\},
\label{eq:family_utility}
\end{equation}
with the three levels calibrated jointly with the other parameters. The levels
capture parity-specific preferences and fixed costs and allow the model to match
the concentration of Korean families at two children.

The branch budget constraint is
\begin{equation}
c_{nr}+n\bigl[x+r\,\bigl(F+C(e)\bigr)\bigr]
=wh\ell+T,
\label{eq:quant_budget}
\end{equation}
with $c_{nr}>0$, $0\leq\ell<1-n\lambda$, $x\geq0$, and $e=0$ when $r=0$.
Here $w$ is the wage per efficiency unit of labor, normalized to one.
Positional effort has the resource cost
\begin{equation}
C(e)=c_e\frac{e^{1+\xi_e}}{1+\xi_e}.
\label{eq:effort_cost}
\end{equation}
Here $c_e>0$ scales the cost and $\xi_e>0$ governs its curvature, implying
$C'(0)=0$.
The fee generates an interior participation margin; without it, every household
would purchase some effort.

Conditional on parity, iid type-I extreme-value participation shocks with scale
$\sigma_s$ smooth the choice between the two branches. The participation
probability and its inclusive value are
\begin{align}
\pi_{r|n}(h,\kappa_p)
&=
\frac{\exp\{V_{nr}(h,\kappa_p;v,T)/\sigma_s\}}
{\sum_{j=0}^{1}\exp\{V_{nj}(h,\kappa_p;v,T)/\sigma_s\}},
\label{eq:entry_logit}\\
\overline U_n(h,\kappa_p)
&=\sigma_s\log\!\left[
\sum_{r=0}^{1}\exp\{V_{nr}(h,\kappa_p;v,T)/\sigma_s\}
\right].
\label{eq:entry_inclusive_value}
\end{align}
For $n=0$, only $r=0$ is available, $x=e=0$, and
$\overline U_0=V_{00}$. Fertility-choice shocks $\epsilon_n$ are iid type-I
extreme value with scale $\sigma$. The probability of choosing $n$ children is
\begin{equation}
\Pr(n|h,\kappa_p)
=
\frac{\exp\{\overline U_n(h,\kappa_p)/\sigma\}}
{\sum_{j=0}^{3}\exp\{\overline U_j(h,\kappa_p)/\sigma\}}.
\label{eq:fertility_logit}
\end{equation}
Let $\pi_{nr}(h,\kappa_p)=\Pr(n|h,\kappa_p)\pi_{r|n}(h,\kappa_p)$ denote the
joint probability of parity and participation, with $\pi_{0|0}=1$. The two
nested logits represent unobserved household heterogeneity in participation and
fertility.

\subsection{Child Human Capital and the Assignment Contest}
\label{sec:child_human_capital}

Child ability follows a discretized AR(1) process:
\begin{equation}
\log\kappa'
=\rho_\kappa\log\kappa_p+\eta',
\qquad \eta'\sim N(0,\sigma_\kappa^2).
\label{eq:ability_process}
\end{equation}
The process is discretized with $N_\kappa$ states using the method of
\citet{Tauchen1986}, yielding the transition matrix $\Pi_\kappa$. Child human
capital is
\begin{equation}
h'
=A_h\kappa'\left(\theta_0+x^{\alpha_1}\right),
\qquad 0<\alpha_1<1.
\label{eq:human_capital}
\end{equation}
The productivity constant $A_h$ is normalized to one. The baseline component
$\theta_0$ summarizes publicly provided education and other human capital
produced without private education spending. Productive education
$x$ raises human capital in every destination.

If the child receives the preferred destination, adult earnings capacity is
multiplied by $\exp(\delta_E)$:
\begin{equation}
\log y'
=\log(wh')+\delta_E D_E.
\label{eq:child_income}
\end{equation}
The current-wage premium $\delta_E$ therefore affects both the parent's expected
child value and the next generation's realized income distribution. By contrast,
$B_E$ affects parental utility but is not recorded as current earnings. This
separation lets the model match the measured wage distribution while allowing
families to value destination attributes outside the contemporaneous wage measure.

\label{sec:assignment_contest}

Productive human capital and positional effort determine a positive contest score,
\begin{equation}
S(h',e)
=(h')^{\theta_h}(1+e).
\label{eq:contest_score}
\end{equation}
The parameter $\theta_h$ governs the role of productive human capital in assignment.
Effort $e$ affects the score but not $h'$ and changes income only through the
destination indicator. Taking logs yields the additive-score counterpart of
Section~\ref{sec:analytical}; the logistic assignment rule below smooths the
sharp cutoff.

The probability of receiving the preferred destination is a smooth approximation to
a cutoff rule:
\begin{equation}
P_E(S;v)
=\Lambda\!\left(\frac{S-v}{\sigma_s v}\right),
\qquad
\Lambda(z)=\frac{1}{1+\exp(-z)}.
\label{eq:admission_probability}
\end{equation}
Here $v>0$ is the equilibrium cutoff and $\sigma_s$ controls residual assignment
noise. A smaller $\sigma_s$ produces a sharper rank-order rule. The smooth
probability reflects idiosyncratic admission noise and is convenient
numerically. The scale $\sigma_s$ is the same one that smooths participation.
One unobserved trait motivates the shared scale: how well the child is suited
to the examination race. Parents observe part of that suitability when
deciding whether to enter, which disperses entry; the remainder appears as
noise in assignment. The cutoff clears a fixed capacity: the birth-weighted share of admitted
children equals the admission share $q_E$, the population share of the
preferred destinations measured in
Section~\ref{sec:preferred_destinations}. Writing $s=(h,\kappa_p)$ and letting
$x_{nr}(s)$ and $e_{nr}(s)$ denote the branch policies, the capacity condition is
\begin{equation}
\frac{
\int\sum_{n=1}^{3}\sum_{r=0}^{1}
\pi_{nr}(s)n
\E_{\kappa'|\kappa_p}
\left[P_E\!\left(S(h'(x_{nr}(s),\kappa'),e_{nr}(s));v\right)\right]
\dd\mu(s)
}{
\int\sum_{n=1}^{3}\sum_{r=0}^{1}\pi_{nr}(s)n\dd\mu(s)
}
=q_E.
\label{eq:quant_capacity}
\end{equation}
All children face the same cutoff. Participation determines access to positional
effort, not whether a child can receive the destination.

The expected child-value term in equation~\eqref{eq:quant_utility} can be written
\begin{align}
\E_{\kappa'|\kappa_p}
\Bigl[
u_c(wh')
+P_E(S;v)\,\bigl\{u_c\!\bigl(wh'\exp(\delta_E)\bigr)-u_c(wh')+B_E\bigr\}
\Bigr].
\label{eq:expected_child_value}
\end{align}
With CRRA curvature, the utility value of the wage premium varies with child
income, while $B_E$ is additive.\footnote{In the logarithmic limit
$\eta_c\rightarrow1$ the assignment term becomes $(\delta_E+B_E)P_E$, the form
used in Section~\ref{sec:analytical}.} The parameter $B_E$ is common across
parents and does not depend on the rank of the child or the outcomes of other
children. Both components raise the private return to effort.

\subsection{Equilibrium, Policy, and Counterfactuals}
\label{sec:lottery_counterfactual}
\label{sec:stationary_distribution}
\label{sec:model_policies}

Conditional on the parental state, parity, ability realization, and destination
outcome, a child enters the next adult generation with the state
\begin{equation}
\widetilde h'
=h'\exp(\delta_E D_E),
\qquad \kappa_p'=\kappa'.
\label{eq:next_state}
\end{equation}
Let $\mathcal{R}$ be the reproduction operator. For each parental state, it
counts the children born, $\sum_{n,r}\pi_{nr}(h,\kappa_p)n$, and distributes them
over next-generation states according to the ability transition, the
human-capital policy, and the assignment probability. Because families differ
in size, $\mu\mathcal{R}$ is a measure of births rather than a probability
distribution, and its total mass is births per adult. A stationary
distribution reproduces its own composition once normalized by that mass:
\begin{equation}
\mu'=\frac{\mu\mathcal{R}}{\int \dd(\mu\mathcal{R})}=\mu.
\label{eq:stationarity}
\end{equation}
The population may shrink across generations. Stationarity fixes the composition
of each generation: $\mu$ is the normalized nonnegative left eigenvector of
$\mathcal{R}$ associated with its Perron root. That root equals births per adult
in the stationary composition and governs the change in cohort size.

This construction generates the income Gini and intergenerational income
persistence jointly. Productive investment, inherited ability, and destination
assignment all affect next-generation income. The preferred-destination cutoff must
therefore be solved together with the stationary distribution.

\begin{definition}[Stationary competitive equilibrium]
\label{def:equilibrium}
A stationary competitive equilibrium consists of branch policy functions,
participation and fertility probabilities, a cutoff $v$, a lump-sum transfer
$T$, and a distribution $\mu$ such that: (i) household policies solve the branch
problems and the two nested logits; (ii) the preferred-destination market satisfies
equation~\eqref{eq:quant_capacity}; (iii) the government budget balances; and
(iv) $\mu$ satisfies equation~\eqref{eq:stationarity}.
\end{definition}

The counterfactuals alter one element of this equilibrium and re-solve the
rest. The main experiment replaces score-based assignment with a
capacity-preserving lottery, which sets
\begin{equation}
P_E(S;v)=q_E
\label{eq:lottery_probability}
\end{equation}
for every child. The destination wage premium and $B_E$ remain in expected child
value, and a fraction $q_E$ of children continues to receive the wage premium in the
next-generation distribution. Because neither $x$ nor $e$ can change assignment
under the lottery, the household sets $e=0$. Productive education remains positive
because it raises $h'$. The prize counterfactuals separate the measured and
unmeasured components. Setting $B_E=0$ retains only the wage premium; setting
$\delta_E=0$ retains only the common
destination value; and setting both to zero eliminates the prize. Each experiment
holds the remaining parameters fixed and re-solves the equilibrium.

The fiscal experiments introduce a tax on total private-education spending, an
oracle tax on positional outlays alone, and a cash transfer per child. Let
$\tau$ denote the broad spending tax and $\tau^{P}$ the oracle positional tax.
The budget constraint becomes
\begin{equation}
c_{nr}+n\Bigl[(1+\tau)x
+r(1+\tau+\tau^{P})\bigl(F+C(e)\bigr)\Bigr]
=wh\ell+T+nb,
\label{eq:policy_budget}
\end{equation}
where $b$ is the pronatal transfer. The broad tax applies to both components of
observed private-education spending. The oracle sets $\tau=0$ and applies
$\tau^{P}$ only to $F+C(e)$; it is a diagnostic benchmark because the productive
and positional components are not separately observed. Government budget
balance requires
\begin{equation}
T+\int\sum_{n=1}^{3}\sum_{r=0}^{1}\pi_{nr}(s)nb\,\dd\mu(s)
=\int\sum_{n=1}^{3}\sum_{r=0}^{1}\pi_{nr}(s)n
\Bigl[\tau x_{nr}(s)
+r\,(\tau+\tau^{P})\bigl(F+C(e_{nr}(s))\bigr)\Bigr]\dd\mu(s).
\label{eq:government_budget}
\end{equation}
The transfer $T$ can be negative when the per-child payment exceeds tax revenue.
The revenue-funded package instead fixes $\tau=0.20$ and $T=0$, choosing $b$ so
that education-tax revenue equals aggregate child-transfer spending.
Every policy is solved as a new stationary equilibrium, including the cutoff and
the distribution.
Supplemental Appendix~C gives the complete conditional household problem, the
reproduction matrix, and the equilibrium algorithms for the benchmark, the full
lottery, the qualified lotteries of Section~\ref{sec:qualified_lottery}, and the
fiscal experiments.

\section{Benchmark Calibration and Lottery Counterfactual}
\label{sec:calibration}
\label{sec:results}

I calibrate the quantitative model for the 1970--75 cohort as a benchmark and
report the main counterfactual. I first fix seven parameters using direct
measurements or standard conventions. KLIPS wave 27 gives the population share
of preferred destinations, $q_E=0.1845$, and their conditional current-wage
premium, $\delta_E=0.257$. KTY's KLIPS measure of parental time per child gives
the common child time requirement, $\lambda=0.041$ of the weekly endowment, or
5.7 hours per week. The other four parameter values follow standard conventions.
The child-time cost is quadratic ($\varphi=1$), the positional effort cost has
curvature $\xi_e=1.5$, and the leisure and child-income curvatures both equal
two ($\gamma=\eta_c=2$, the inverse elasticity of substitution used by
KTY).\footnote{The Parameter Sensitivity section of Supplemental Appendix~D
profiles the child-income curvature, varies the positional and child-time cost
curvatures, and checks the education and earnings grids. The lottery effect
remains large in every converged alternative.}

I calibrate the remaining sixteen parameters jointly by minimizing a
minimum-distance criterion based on ten point targets and two interval
restrictions. I use the parameter vector with the lowest criterion value obtained
in the numerical search as the benchmark. Since the model moments depend jointly
on the parameter vector, there is no one-to-one link between individual parameters
and moments. Some moments are nevertheless more informative for particular
parameters, as the discussion below explains, even though there is no formal
identification procedure.

For the ten point targets, I express the residuals as relative deviations,
placing moments measured in different units on a common percentage scale. The
other two restrictions are KEEP estimates: the wage return to private-education
spending and the logit slope of preferred-destination entry with respect to log
spending. They enter with zero loss inside their ninety-five percent confidence
intervals. Supplemental Appendix~D reports the criterion and parameter
transformations. The lottery counterfactual is computed only after the calibration
is complete. It replaces score-based assignment with a lottery that preserves
capacity and destination value. Intermediate regimes locate the source of the
fertility response, and prize counterfactuals identify which part of destination
value sustains the race.

\subsection{Calibration Results}
\label{sec:moments}

Most of the calibration moment set follows KTY, but all spending objects use a
common private-education spending basis. Table~\ref{tab:moments} lists the twelve:
the three parity shares, the income Gini, work hours, the fertility--income
elasticity, bottom-quintile childlessness, the lifetime private-education spending
level, intergenerational income persistence, the private-education burden
gradient, and
the two KEEP return estimates. The lifetime private-education target is 6.4
percent of income, or 72.8 percent of the 8.8 percent total in
Section~\ref{sec:empirical}. This is the relevant accounting base because
publicly provided education is common across households, whereas the model's
household-specific inputs correspond to private-education spending. This accounting choice
excludes KTY's total-spending investment--income elasticity from the target set.\footnote{A private-education version
of the elasticity would largely duplicate the burden gradient, which already
disciplines how spending varies with income.} A model analogue on the same spending basis is
reported as a descriptive statistic in Table~\ref{tab:moments}.

Three moments directly discipline the contest: the private-education burden
gradient across income quintiles, the KEEP wage return to private-education spending, and
the logit slope of preferred-destination entry with respect to log spending. The
first determines how the cost of the contest varies with income. The two KEEP
moments restrict the productive return and the assignment margin. Their estimates
are $0.0017$ (s.e. $0.0125$) and $0.027$ (s.e. $0.059$), yielding intervals of
$[-0.023,0.026]$ and $[-0.088,0.142]$. Unlike the other ten moments entering as point targets,
these two KEEP coefficients have confidence intervals that include zero, so even
their signs are uncertain. Point targets would pull the model toward sampling
noise. The interval restrictions require only that the model-implied coefficients
lie within the ranges that the data cannot reject.

\begin{table}[!t]
\centering
\footnotesize
\setlength{\tabcolsep}{6pt}
\caption{Internally calibrated parameters}
\label{tab:parameters}
\begin{threeparttable}
\begin{tabular}{llr}
\toprule
Parameter & Interpretation & Value \\
\midrule
$\phi_1,\phi_2,\phi_3$ & Parity preference levels & $1.469,\ 2.329,\ -1.658$ \\
$\beta$ & Weight on children & 1.801 \\
$\sigma$ & Fertility-choice dispersion & 1.841 \\
$\nu$ & Leisure weight & 1.558 \\
$\sigma_\kappa,\rho_\kappa$ & Ability dispersion, persistence & $0.450,\ 0.302$ \\
$B_E$ & Destination value beyond current wages & 5.35 \\
$c_e$ & Positional cost scale & 0.050 \\
$\sigma_s$ & Gate noise and participation dispersion (one parameter) & 0.240 \\
$\theta_h$ & Score loading on human capital & 0.242 \\
$\alpha_1$ & Education elasticity in human-capital production & 0.035 \\
$\theta_0$ & Human-capital intercept & 0.969 \\
$\psi$ & Price of the child's time & 2.429 \\
$F$ & Contest entry fee (share of income) & 0.019 \\
\bottomrule
\end{tabular}
\end{threeparttable}
\end{table}

Table~\ref{tab:parameters} reports the sixteen calibrated parameters. The calibrated
education elasticity is $\alpha_1=0.035$, far below the value of
0.4 commonly used in quantity--quality models, including KTY. Fixing it at 0.4
pushes the implied wage return outside the KEEP interval and lowers
private-education spending to 5.0 percent of income against the 6.4 percent target. It also
raises the root mean squared error across the nine baseline moments from 0.7 to
13.3 percent.
The destination value $B_E=\vBE$ follows from the same accounting. Because the
model-implied productive return is small, the model can sustain the observed level of
spending only through the assignment value of preferred destinations. The calibration
therefore assigns those destinations substantial value beyond their current-wage premium.

\begin{table}[!b]
\centering
\footnotesize
\setlength{\tabcolsep}{5pt}
\caption{Moments and model fit}
\label{tab:moments}
\begin{threeparttable}
\begin{tabular}{lrrr}
\toprule
Moment & Model & Measured & Error \\
\midrule
\multicolumn{4}{l}{\emph{Panel A: baseline moments (KLIPS, following KTY)}} \\
One-child share & 0.201 & 0.203 & $-0.6$\% \\
Two-child share & 0.613 & 0.625 & $-1.9$\% \\
Three-or-more share & 0.140 & 0.140 & $-0.4$\% \\
Income Gini & 0.264 & 0.263 & $+0.2$\% \\
Work-time share & 0.292 & 0.292 & $-0.0$\% \\
Fertility--income elasticity & 0.079 & 0.079 & $-0.0$\% \\
Q1 childlessness & 0.065 & 0.065 & $-0.2$\% \\
Private-education spending / income & 0.064 & 0.064 & $+0.0$\% \\
Income elasticity of private-education spending & 0.813 & 0.788 & untargeted \\
Intergenerational income persistence & 0.320 & 0.320 & $+0.0$\% \\
\addlinespace
\multicolumn{4}{l}{\emph{Panel B: contest moments (KLIPS, KEEP)}} \\
Education burden, Q1/Q5 (private) & 1.297 & 1.297 & $-0.0$\% \\
Wage return to spending & 0.0171 & $[-0.023,\,0.026]$ & inside \\
Destination-entry logit slope & 0.0101 & $[-0.088,\,0.142]$ & inside \\
Household participation & 0.808 & 0.766 & untargeted \\
Participation, Q5/Q1 & 1.461 & 1.486 & untargeted \\
\bottomrule
\end{tabular}
\end{threeparttable}
\begin{fullwidthtablenotes}
The RMSE covers all ten point targets: nine baseline moments in Panel A
and the burden gradient in Panel B. The
income elasticity of private-education spending and both participation rows are untargeted.
The elasticity uses within-year income quintiles in the data and
permanent-income quintiles in the model; participation compares positive
private-education spending with contest entry. The burden gradient is a point target;
the KEEP moments enter as ninety-five percent intervals.
\end{fullwidthtablenotes}
\end{table}

Table~\ref{tab:moments} shows that the model fits all ten point targets very
closely. Their root mean squared relative error is \vPointRmse\ percent, and
no error exceeds two percent.\footnote{Using the same percentage-RMSE
formula, KTY's published benchmark has an RMSE of about 8.6 percent across the
ten comparable point targets. The moment sets differ: KTY targets the total-spending
investment--income elasticity, whereas this paper targets the private-education
burden gradient.} Both KEEP interval restrictions are also satisfied. The table also reports three
untargeted model implications that closely match their empirical counterparts.
The income elasticity of private-education spending is 0.81 in
the model and 0.79 in KLIPS. Contest
participation is not directly observed in the data. KLIPS instead measures a
related event: positive private-education spending in a household-year. The
calibrated entry margin is consistent with that measurement: the contest-entry
rate is 0.81 against the any-spending rate of 0.77, and the entry gradient
across income quintiles differs by about two percent from the measured participation
gradient. The private-education spending level and burden gradient in the
objective discipline these untargeted margins.

The calibration assigns \vPosSharePct\ percent of private-education spending to
positional effort. This share is an outcome of the calibration, not a target.
Purpose codes in the Private Education Expenditures Survey (PEES) place an
accounting upper bound of 0.97: only 3 percent of
spending is reported as supervision or sociality.\footnote{Supplemental Appendix~D reports the
detailed PEES purpose-code classification.} Thirty percent of general-subject
spending is explicitly motivated by admission preparation, anxiety, or advance
study and is therefore clearly competition-related. Another 66 percent is
reported as supplementing or deepening school lessons. This spending may build
human capital, protect a child's rank, or do both. The calibration uses the KEEP
return estimates, together with the level and income gradient of
private-education spending, to determine how much of this component is
productive and how much is positional.
The result closely matches independent structural evidence from
\citet{Kang2026}. Estimating a dynamic Korean admissions tournament, he finds
that competition accounts for 76.6 percent of tutoring expenditure over
secondary school and 89 percent in the final year. Tutoring in his model can also build
skills, whereas the positional input here cannot; the two models nevertheless
assign nearly the same share of tutoring expenditure to the admissions race.

The small productive return is not imposed by the KEEP wage restriction.
Dropping that restriction and recalibrating leaves $\alpha_1$ at 0.035 and the
implied wage return at 0.017. The spending level and burden gradient therefore
select the low productive return; KEEP corroborates rather than creates that
result.

\begin{remark}[Spending spillover]
KTY estimate that a reduction in top families' private-education spending lowers
spending among other families, with a spillover coefficient of $0.039$ (s.e.\
$0.015$). They use this moment to calibrate the comparison externality in
their model. It is not a calibration target here because the assignment model has
no common reference level. I reproduce KTY's design while allowing the 11~p.m.
curfew to affect other families directly. In the lower-education sample used for
KTY's calibration, the estimate is then $0.036$ (s.e.\ $0.025$), with an
Anderson--Rubin confidence set of $[-0.022,\,0.090]$.\footnote{Conditional on the
10~p.m. rule, the 11~p.m. rule has little independent first-stage power. With the
11~p.m. rule included as a control rather than a second instrument, five of the
six sample--specification confidence sets include zero; the exception is the
baseline lower-income sample.}
Under the relaxed design, the data therefore do not require a positive
common-reference channel, although the point estimates remain positive.
\end{remark}

\begin{remark}[The mechanisms on a common utility scale]
At the benchmark distribution, KTY's comparison term and this paper's admission
prize can be expressed on the same utility scale. KTY's reference income, the
mean among the top 15 percent of children, is 4.35, compared with an overall
mean of 2.26. Their calibrated comparison weight lowers utility at mean income
by 0.078 and raises the marginal utility of child income by a factor of 1.39.
Here the expected admission prize is 1.01, or 12.9 times that utility
subtraction. KTY's mechanism therefore moderately raises every family's
marginal return, whereas the assignment mechanism offers a large discrete prize
with probability 0.1845.
\end{remark}

\subsection{The Lottery Counterfactual}
\label{sec:main_counterfactual}

The main counterfactual replaces score-based allocation with a
capacity-preserving lottery. In the benchmark, children are ranked by their
scores and a share $q_E$ receives a preferred destination. Under the lottery,
every child receives the same admission probability $q_E$, regardless of score.
The same share of children therefore reaches preferred destinations, but family
choices no longer determine who does. The wage premium $\delta_E$, the common
nonwage value $B_E$, and the productive return to education also remain
unchanged. Productive education continues to build human capital, but neither
productive nor positional spending can improve admission.

I use the lottery as the main experiment because it removes the institutional source
of the assignment externality without changing the scarcity or value of the
prize. Transfers change family resources, education taxes change the price of
investment, and capacity expansion changes the supply and potentially the value
of preferred destinations. The lottery instead removes the admission advantage
that families can obtain through spending. It therefore measures the fertility
cost of score-based competition itself. Completed fertility rises from
\vContest\ to \vLottery, an increase of \vEffect\ children. The response is
concentrated at higher parities. The share of families with
three or more children rises from 14.0 percent in the contest benchmark to 23.9 percent
under the lottery, while the two-child share falls. The largest change occurs at
the two-to-three-child margin.

\begin{table}[!b]
\centering
\footnotesize
\setlength{\tabcolsep}{2.5pt}
\caption{Assignment regimes}
\label{tab:main_result}
\begin{threeparttable}
\begin{tabular}{lrrrrrr}
\toprule
Regime & Fertility & Effect & \shortstack{Productive\\$x/y$} & \shortstack{Contest\\outlay $/y$} & \shortstack{Entry\\rate} & \shortstack{Output per\\unit of labor} \\
\midrule
1. Benchmark contest & 1.846 & -- & 0.015 & 0.050 & 0.808 & 1.000 \\
2. Infeasible frontier: sorting retained & 2.082 & $+0.235$ & 0.030 & 0.000 & 0.000 & 1.010 \\
3. Capacity-preserving lottery & 2.086 & $\mathbf{+0.240}$ & 0.017 & 0.000 & 0.000 & 0.993 \\
\bottomrule
\end{tabular}
\end{threeparttable}
\begin{fullwidthtablenotes}
All rows hold $q_E$, $\delta_E$, and $B_E$ fixed. Row~2 is an infeasible
frontier: it removes the positional-preparation entry branch, so no
household pays the fee or chooses positional effort, while retaining
score-based assignment through productive human capital.
Row~3 assigns every child an admission probability of $q_E$, so positional effort and
contest entry disappear. Productive education and contest outlay are
shares of household income. Output per unit of labor is
$\E[wh\ell]/\E[\ell]$, normalized to one in Regime~1. The entry rate
is the share of households paying the fee.
\end{fullwidthtablenotes}
\end{table}

Table~\ref{tab:main_result} compares the lottery counterfactual (Regime~3) with the
current score-based benchmark (Regime~1). In the benchmark, households devote 1.5 percent of income to
productive education and \vRaceMoneyPct\ percent to positional spending. Under
the lottery, productive education rises slightly to 1.7 percent, positional
spending disappears, and output per unit of labor falls by only
\vLotteryProdCostBenchmark\ percent. Random assignment sacrifices some positive
sorting between human capital and preferred destinations, but eliminating an
input that builds no human capital releases enough resources to preserve
productive investment. The result is a large fertility increase with a modest
productivity cost.

Regime~2 is an infeasible frontier, not a policy counterfactual. It removes the
positional-preparation entry branch, so no household pays the fee or chooses
positional effort, while retaining assignment through productive human capital.
An admission system observes the resulting score, not the two inputs separately,
and therefore cannot implement this allocation directly. The regime nevertheless
isolates the fertility contribution of the positional channel while retaining
productive sorting: fertility rises to \vFrontier,
accounting for \vFrontierShare\ percent of the lottery effect, while output per
unit of labor remains above the current benchmark.
Moving from this frontier to the lottery adds almost no fertility and lowers
output per unit of labor by \vLotteryProdCost\ percent.

\label{sec:channels}
\label{sec:admission_by_parity}

The fertility response comes from the resources released when the positional
channel shuts down. The average admission prospect remains $q_E$, but families no longer
spend resources to improve their relative scores. The benchmark's positional
outlay disappears, productive education is maintained, and the recovered
resources support more children.
A second channel is dilution within the family. Another child may spread
education spending more thinly and reduce the admission probability of each
child. Quantitatively, this force is small. Holding the household's state fixed,
the admission probability per child changes by less than six percent of its
level between one and three children, and the expected number of admitted
children rises by almost $q_E$ with each additional child. Raw admission rates
actually rise with family size, from 0.164 for one-child families to 0.197 for
two-child families and 0.223 for families with three or more children. This is
selection: families choosing more children tend to occupy states with better
admission prospects. The fixed-state comparison isolates the dilution faced by
a family deciding whether to have another child.

Education competition therefore raises the cost of another child mainly because
contest spending grows with family size, not because an additional sibling
sharply reduces the admission prospects of existing children. This distinction
matters for capacity policy: adding preferred positions may draw more families
into the contest even as it relaxes scarcity, as Section~\ref{sec:capacity}
shows.

\label{sec:value_results}

The size of this resource channel depends on the prize. A contest that absorbs
five percent of household income requires preferred destinations valuable enough
to justify the outlay. The model separates the measured current-wage premium
$\delta_E$ from $B_E$, a composite destination value beyond current wages.
Table~\ref{tab:prize_decomposition} varies these two components while holding the
remaining parameters fixed and re-solving both assignment regimes.

\begin{table}[htbp]
\centering
\footnotesize
\caption{Destination-value decomposition}
\label{tab:prize_decomposition}
\begin{threeparttable}
\begin{tabular}{lrrrr}
\toprule
Destination value & \shortstack{Contest\\fertility} & \shortstack{Lottery\\fertility} & Difference & \shortstack{Contest outlay\\ / income} \\
\midrule
Benchmark: $\delta_E+B_E$ & 1.846 & 2.086 & $+0.240$ & 0.050 \\
Current wage only: $B_E=0$ & 1.614 & 1.608 & $-0.006$ & 0.015 \\
Large-firm mover premium only & 1.605 & 1.597 & $-0.007$ & 0.015 \\
Unmeasured value only ($\delta_E=0$) & 1.842 & 2.075 & $+0.233$ & 0.051 \\
No destination value & 1.600 & 1.592 & $-0.008$ & 0.015 \\
\bottomrule
\end{tabular}
\end{threeparttable}
\begin{fullwidthtablenotes}
$\delta_E=0.257$ is the measured composite preferred-destination
wage premium in KLIPS. The large-firm mover row uses the event-study
estimate of 0.076 documented in Supplemental Appendix~A and sets $B_E=0$.
Each row holds the remaining benchmark
parameters fixed.
\end{fullwidthtablenotes}
\end{table}

Measured wage premiums alone cannot sustain the benchmark race. With $B_E=0$,
contest outlay falls from 0.050 to 0.015 of income and the lottery effect
disappears: it is $-0.006$ with the current-wage premium and $-0.007$ with the
large-firm mover estimate.\footnote{The ``Preferred-Destination Measurement''
subsection of Supplemental Appendix~A reports the large-firm mover sample,
event-study specification, and estimate.} By
contrast, retaining $B_E$ while setting
$\delta_E=0$ leaves contest outlay at 0.051 and the lottery effect at $+0.233$,
close to \vEffect. The race and its fertility effect are therefore sustained by
destination value beyond current wages. $B_E$ collects employment security, career progression, deferred compensation,
professional standing, and social recognition that a current-wage regression
does not measure. \citet{FangLiu2026} emphasize the social value attached to
examination achievement as one component of this broader prize. The model does
not price these attributes separately. Given the KEEP bounds on the productive
return, $B_E$ is the composite value required to reconcile observed spending
with competition for preferred destinations.

The prize cannot be arbitrarily small or large while the remaining parameters
stay at their benchmark values. Setting $B_E=2$ lowers private-education spending to 4.5
percent of income and participation from 0.81 to 0.64, below the 6.4 percent
spending target and the 0.77 participation measurement. Setting $B_E=8$ instead
raises private-education spending to 8.0 percent and participation to 0.86, overshooting
both. These probes illustrate why $B_E$ is jointly rather than separately
disciplined: changing it alone moves both spending and participation away from
their empirical counterparts.

\section{Policy Experiments}
\label{sec:policy}
\label{sec:policy_assignment}

The benchmark counterfactual shows that the assignment rule is the central
policy margin: a full lottery removes the return to positional spending while
leaving the admission share and value of preferred destinations unchanged. A
full lottery is an institutional benchmark and would be difficult to implement
as a replacement for achievement-based admission in practice. The relevant policy
question is how much of the fertility gain survives when admission continues to
reward achievement, and whether more familiar interventions can reproduce it
without changing the assignment rule. \citet{MahlerTertiltYum2025} likewise
discuss adding randomness to selection, expanding elite-college capacity, and
reducing the salience of high-stakes tests and precise rankings as ways to soften
admissions competition.

I begin with a qualified lottery, which uses the score only to determine who
enters a qualified pool and then draws the available positions at random
within that pool. Varying the size of the pool traces a path from the current
score ranking to random assignment and reveals how much score differentiation
must be removed before the race weakens. I then consider capacity expansion and
fiscal policies. Capacity expansion increases the number of preferred
positions but preserves score-based assignment, whereas transfers and education
taxes change household resources or the price of spending without changing
scarcity. These comparisons show whether the assignment reform can be
replicated by adding positions or subsidizing children, and why a modest change
in the admission rule can initially intensify the contest. The section ends by
comparing the welfare and productivity consequences of all the experiments.
All fiscal policies are evaluated in balanced-budget stationary equilibria.

\subsection{The Qualified Lottery}
\label{sec:qualified_lottery}

A qualified lottery retains the examination as a screening device but stops
using fine differences in scores to rank students within a broad qualified
pool. \citet{FangLiu2026} present this rule as a potential alternative to
score-based admission. Randomization within the qualified pool preserves an
achievement threshold while reducing the payoff to marginal investments in
rank. They argue that the resulting decline in education competition lowers
the expected cost of a child and can therefore raise fertility. Following
their proposal, let $m$ denote the population share above the qualification
threshold. The economy still has positions for a share $q_E$ of the cohort, so
each qualifier is admitted with probability $q_E/m$.
When $m=q_E$, only the top $q_E$ share of the cohort qualifies and every
qualifier is admitted; this is the current score-based assignment. Intermediate
values preserve a score threshold while progressively reducing the return to
rank above it. At $m=1$, everyone qualifies and assignment is random. The
qualified lottery then coincides with the full-lottery counterfactual:
positional effort, contest entry, and the associated fee disappear.
Supplemental Appendix~C gives the assignment probability and equilibrium
algorithm for intermediate values of $m$.

\begin{table}[htbp]
\centering
\footnotesize
\setlength{\tabcolsep}{6pt}
\caption{The qualified lottery: qualify by score, then draw}
\label{tab:qualified}
\begin{threeparttable}
\begin{tabular}{lrrrr}
\toprule
Qualified share $m$ & \shortstack{Fertility\\change} & \shortstack{Share of full\\lottery effect} & \shortstack{Contest\\outlay} & \shortstack{Participation\\rate} \\
\midrule
0.184 (benchmark contest) & +0.0000 & 0\% & 0.0497 & 0.808 \\
0.25 & -0.0080 & -3\% & 0.0524 & 0.829 \\
0.35 & -0.0003 & 0\% & 0.0549 & 0.869 \\
0.50 & +0.0361 & 15\% & 0.0549 & 0.919 \\
0.70 & +0.1115 & 46\% & 0.0480 & 0.952 \\
0.85 & +0.1807 & 75\% & 0.0406 & 0.937 \\
1.00 (full lottery) & +0.2399 & 100\% & 0.0000 & 0.000 \\
\bottomrule
\end{tabular}
\end{threeparttable}
\begin{fullwidthtablenotes}
The threshold clears so the qualified share of the population equals $m$; every qualifier is admitted with the uniform probability $q_E/m$, so expected admissions equal $q_E$ in every row. At $m=q_E$ the rule is the benchmark contest; at $m=1$ everyone qualifies, the cutoff is immaterial, and the experiment coincides with the full lottery.
\end{fullwidthtablenotes}
\end{table}

The response is non-monotone because widening the pool changes both the number
of contestants and the value of an additional score. Just above $m=q_E$, the
reform gives families outside the original admission range a realistic chance
of qualifying. Entry and spending can therefore rise before the link between
scores and admission becomes weak enough to shrink the race. At $m=0.25$, each
qualifier has roughly a three-in-four admission chance, participation rises
from 0.81 to 0.83, and fertility falls by 0.008. Participation reaches 0.95 at
$m=0.70$, even as the fertility gain begins to emerge. The same threshold effect
appears in \citet{Kang2026}'s estimated Korean
admissions tournament, in which randomizing assignment only above a college-tier cutoff
reduces tutoring expenditure by only 7 to 11 percent, because families continue to
compete to reach the qualified pool.

The rule becomes effective only when qualification is broad. A pool containing
half the population delivers 15 percent of the full-lottery fertility effect,
70 percent qualifying delivers 46 percent, and 85 percent delivers 75 percent.
Universal qualification delivers the full effect. The exercise
therefore identifies the operative design margin: preserving an exam threshold
is compatible with a large fertility gain, but only if the threshold is broad
enough that small score differences among qualified students no longer
determine who receives the preferred destinations.

\subsection{Capacity and Its Interaction with Assignment}
\label{sec:capacity}

Adding preferred positions appears to be the most direct way to relax scarcity.
However, it need not weaken the education race. More positions make a preferred
destination attainable for more families and can draw new contestants into the
race. At the same time, a larger supply may make each position less scarce and
therefore less valuable. Whether capacity raises fertility depends on which of
these forces dominates. Let $q$ denote the share of the cohort receiving a
preferred destination and $q_0$ its benchmark value. I write
$B_E(q)=B_E(q_0)(q/q_0)^{-\eta}$, where $\eta$ is the absolute elasticity of the
nonwage destination value with respect to its supply. When
$\eta=0$, a new position carries the same value as an existing one, so expansion
creates additional aggregate destination value. When $\eta=1$, the nonwage
value of each position falls one-for-one with supply, leaving the aggregate
nonwage prize $qB_E(q)$ unchanged. The measured wage premium $\delta_E$ is held
fixed. Because its utility value is small relative to $B_E$, $\eta$ governs most
of the change in the incentive to compete.

The results show why capacity alone is not necessarily pronatal. When
$\eta=1$, expansions of 2.5 to 30 percent lower fertility by $0.001$ to $0.009$.
The reform places the prize within reach of more families without creating much
new aggregate value, so additional entry initially intensifies the score-based
contest. Large expansions eventually make access sufficiently common to relax
scarcity, but the gain still depends heavily on what the new positions are
worth. Raising the preferred-destination share to one half increases fertility
by 0.269 when every new position retains the benchmark value ($\eta=0$), but by
only 0.043 when expansion mainly divides a fixed nonwage prize ($\eta=1$).

\begin{figure}[t]
\centering
\includegraphics[width=0.82\textwidth]{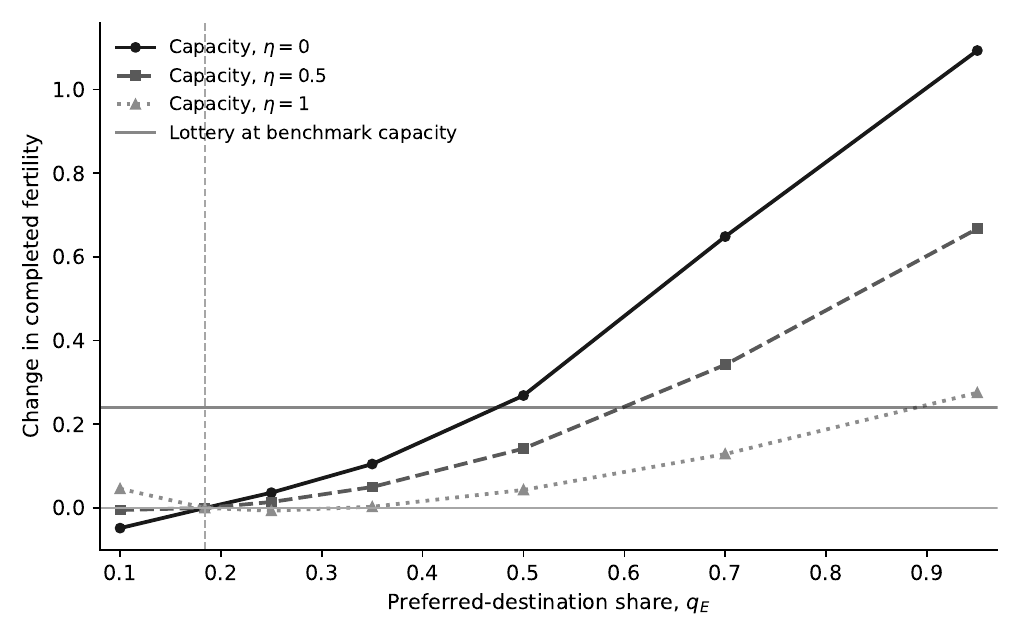}
\caption{Capacity expansion under score-based assignment. Each line reports
the change in completed fertility as the preferred-destination share $q_E$
changes. The elasticity $\eta$ is the absolute supply elasticity of each
position's nonwage value: lower values mean that expansion creates more
aggregate destination value. The horizontal line is random assignment at the
benchmark capacity, and the vertical line marks that capacity.}
\label{fig:capacity}
\end{figure}

The assignment rule matters at every capacity reported in Supplemental
Appendix~E. For every $\eta$, random assignment raises fertility relative to
score-based assignment at the same $q_E$. At $\eta=1$ and $q_E=0.25$, for
example, capacity alone lowers
fertility by 0.007, whereas the combined reform raises it by 0.244. The gap
narrows as $q_E$ approaches one because near-universal access leaves little
incentive to compete for rank under either assignment rule.

\FloatBarrier
\subsection{Transfers and Education Taxes}
\label{sec:transfers}
\label{sec:education_taxes}

Transfers and education taxes change the household budget or the price of
spending, but leave the admission rule intact. Families therefore retain the
same reason to outspend one another for a preferred destination. This is why
policies that look powerful in a household budget have little effect once the
education race adjusts in equilibrium.

A balanced-budget pronatal transfer directly lowers the net resource cost of a
child. A transfer equal to two percent of the average wage per child raises
completed fertility by \vTransferTwo, and a five-percent transfer raises it by
\vTransferFive. The larger transfer delivers only about one eighth of the
assignment effect because it subsidizes children without removing the
competitive spending attached to them. This result gives quantitative content
to \citet{FangLiu2026}'s argument that cash transfers are weak against an
equilibrium of competitive parental investment.

Education taxes attack the spending more directly, but not the incentive that
produces it. Because productive and positional outlays are not separately
observed, the policy-relevant experiment taxes total private-education
spending. I also report an infeasible oracle tax that targets positional
outlays alone. The oracle avoids discouraging human-capital investment and
therefore gives taxation its best chance to reduce the race. Yet tax rates of
10 or 20 percent in either experiment change completed fertility by less than
0.002 children. When all families cut spending, the admission cutoff falls as
well, leaving relative positions largely unchanged, and the balanced-budget
rebate returns the revenue to households. Supplemental Appendix~B.5 shows why
the oracle tax avoids directly discouraging productive education. In general
equilibrium, however, cutoff adjustment makes both taxes nearly ineffective.

The source of the externality determines what an education tax can do. In the
comparison-motive model of \citet{MahlerTertiltYum2025}, a parental-investment
tax financing pronatal transfers implements the first best. KTY's tournament
variant provides a closer comparison. With tax revenue rebated lump-sum, tax
rates of 10 and 20 percent change fertility by only $-0.5$ and $-1.0$ percent.
KTY attribute this small response to two offsetting effects: the tax weakens the
externality but also raises the price of education. The near-null result here has
a different mechanism. The cutoff re-clears because the broad spending tax and
the oracle positional tax change the price of competing without changing
scarcity or the score-based assignment rule.

I next direct all revenue from a 20-percent education-spending tax to an equal
per-child transfer, setting $T=0$. The balanced-budget transfer is
$b/w=\vTaxTransferRate$ percent. The package raises completed fertility by
\vTaxTransferFert\ children and welfare by \vCETaxTransfer\ percent. Earmarking
is more pronatal than an equal rebate, but the effect remains small because the
assignment rule is unchanged.

\subsection{Welfare}
\label{sec:welfare}

Table~\ref{tab:welfare} reports consumption-equivalent changes in stationary
ex-ante adult welfare together with output per unit of labor, normalized to one
in the current benchmark. I use a baseline welfare measure that excludes the
child-time utility cost from both sides because its monetary value depends on
the unpriced common unit of the two education inputs. Allocations continue to
respond to that term, so the baseline measure is a partial-accounting
consumption equivalent evaluated at the model's allocation.
Supplemental Appendix~F defines the welfare criterion, decomposes this baseline
measure, reports the additional child-time effect and destination-value
sensitivity, and derives the transition law.

At a fixed admission share, random assignment raises welfare by \vCEexTime\
percent of consumption even though output per unit of labor falls by
\vLotteryProdCostBenchmark\ percent.
The decline comes from less ability-based sorting, while the positional spending
that disappears builds no human capital. The welfare gain is therefore net
of the productivity loss: the release of family resources and the reallocation
of destination value more than offset it.

Beyond the baseline measure, the model assigns a large benefit to releasing
children's time from positional preparation. This channel adds
\vChildTimeRelease\ utility units and accounts for \vChildTimeShare\ percent of
the full-model welfare increase. I report it separately rather than convert it
to consumption because the monetary value of the common education-input unit
is not estimated.

Two assumptions govern the baseline welfare gain. First, sufficient destination
value must survive random assignment. The gain falls to approximately zero if
the entire measured wage premium is match-specific and disappears under the
lottery. Second, the admission share must remain fixed as cohort size changes.
If the number of positions is fixed instead, additional births dilute admission
probabilities, and the welfare change is \vCEabsSlots\ percent even though
fertility rises by $0.199$.

The broad qualified lottery provides a more institutionally conservative
comparison. At $m=0.85$, it raises fertility by \vQualifiedFert\ and welfare
by \vCEqualified\ percent, with output per unit of labor equal to
\vQualifiedOutput\ relative to the benchmark. It captures three quarters of the
full lottery's fertility gain but only about half of its welfare gain. The
reason is that it combines the costs of both regimes: the broad pool weakens
productive sorting, while most families still enter the contest and continue
to spend for rank. Retaining an achievement threshold may make the reform
easier to implement, but does not by itself eliminate the social cost of the
race.

\begin{table}[htbp]
\centering\footnotesize
\caption{Welfare, fertility, and output under the policy experiments}
\label{tab:welfare}
\begin{threeparttable}
\begin{tabular}{lrrr}
\toprule
Policy & $\Delta$ fertility & CE (\%) & \shortstack{Output per\\unit of labor} \\
\midrule
Random assignment (lottery) & $+0.240$ & $+16.3$ & $0.993$ \\
Qualified lottery, $m=0.85$ & $+0.181$ & $+8.8$ & $0.988$ \\
Pronatal transfer, 2\% of the wage per child & $+0.012$ & $+1.0$ & $0.999$ \\
Pronatal transfer, 5\% of the wage per child & $+0.031$ & $+2.5$ & $0.997$ \\
Education-spending tax, 20\% & $-0.001$ & $+0.6$ & $1.002$ \\
Targeted positional tax, 20\% (oracle) & $-0.001$ & $+0.6$ & $1.003$ \\
Education tax, 20\%, with revenue-funded transfer & $+0.003$ & $+0.9$ & $1.002$ \\
Capacity $\times 1.10$, $\eta=0$ & $+0.010$ & $+39.7$ & $1.006$ \\
Capacity $\times 1.10$, $\eta=0.5$ & $+0.003$ & $+17.9$ & $1.005$ \\
Capacity $\times 1.10$, $\eta=1$ & $-0.004$ & $+0.4$ & $1.005$ \\
Capacity $\times 1.10$ with random assignment, $\eta=1$ & $+0.241$ & $+18.0$ & $0.998$ \\
\midrule
Random assignment, absolute number of positions fixed & $+0.199$ & $-19.8$ & $0.989$ \\
\bottomrule
\end{tabular}
\end{threeparttable}
\begin{fullwidthtablenotes}
CE is the consumption-equivalent change in stationary ex-ante adult
welfare, excluding the child-time term because the two education inputs
lack a common market conversion. Output per unit of labor is normalized
to one in the benchmark. Fiscal experiments balance the government budget.
The education-spending tax applies to productive and positional outlays.
The oracle tax applies only to the unobserved positional component.
The revenue-funded package sets the lump-sum adjustment to zero and
chooses the per-child transfer to exhaust education-tax revenue.
Except for the final row, each counterfactual holds its stated admission
share fixed across cohort sizes; the final row holds the absolute number
of positions fixed.
\end{fullwidthtablenotes}\end{table}

\FloatBarrier

The five-percent pronatal transfer raises welfare by \vCEtransferFive\ percent,
and both education-tax experiments change welfare by less than one percent in
absolute value. Capacity retains the ambiguity
seen in fertility: at $\eta=1$, a ten-percent expansion changes welfare by
\vCEcapEtaOne\ percent, while adding random assignment raises it by
\vCEcapLot\ percent. Among the transfer, tax, and capacity experiments, output per unit
of labor remains within 0.6 percent of the benchmark. Their welfare ranking is
therefore driven mainly by the resources released and the value of additional
destinations, not by large changes in aggregate productivity.

The welfare gains begin immediately and remain positive throughout the
transition. The first generation gains \vCEgenOne\ percent, after which average
ex-ante welfare converges to the steady-state gain. The gains are concentrated
among families at the bottom of the permanent-wage distribution, whose assignment prospects
improve under the lottery. The top quintile loses because score-based assignment
had allowed its resources to purchase a higher admission probability.

\section{Cohort Comparison}
\label{sec:cohorts}
\label{sec:cohort_decomposition}

The benchmark analysis explains a low level of fertility, not necessarily its
continued decline. To separate these margins, I compare adjacent birth cohorts
in KLIPS under a common construction. Completed fertility falls from 1.873
among women born in 1970--75 to 1.786 among women born in 1976--81. Both cohorts
are observed at ages 40--43, with completed fertility measured through age 43
and work hours constructed in the same way. The comparison asks whether the
contest intensified or desired family size changed.

The model provides a way to compare these explanations because it separates the
contest-and-technology block from preferences and household heterogeneity. In
the main specification, I recalibrate the younger cohort's parity tastes, weight
on children, fertility-choice dispersion, leisure weight, and ability process.
I hold the destination value, entry and effort costs, score technology, and
human-capital technology at their older-cohort values. Destination capacity,
the wage premium, the child time requirement, the private-education burden
gradient, and the KEEP restrictions also
remain common. I treat the younger cohort's lifecycle spending ratio and
KTY-style total-spending elasticity as descriptive because its children's education
histories are right-truncated. The elasticity is untargeted in both cohorts, and
the lifecycle ratio is omitted from the younger calibration; every retained
cohort target is constructed identically across cohorts. The specification asks whether
changes in preferences and heterogeneity can account for the fertility decline
with the education race held fixed. Supplemental Appendix~A documents the common
sample construction and the truncation of the younger cohort's education histories.

\begin{table}[htbp]
\centering\footnotesize
\caption{The two cohorts: contest frozen, preferences recalibrated}
\label{tab:cohort_fit}
\begin{threeparttable}
\begin{tabular}{lrr}
\toprule
 & 1970--75 & 1976--81 \\
\midrule
Cohort-target RMSE (\%) & 0.68 & 0.21 \\
Contest fertility & 1.8463 & 1.7799 \\
Lottery fertility & 2.0862 & 2.0545 \\
Lottery effect & 0.2399 & 0.2746 \\
Contest outlay / income & 0.0497 & 0.0477 \\
Positional share & 0.773 & 0.778 \\
Household participation & 0.808 & 0.770 \\
Destination value $B_E$ & 5.35 & 5.35 (frozen) \\
Fertility-choice dispersion $\sigma$ & 1.841 & 1.597 \\
Parity taste $\phi_1$ & 1.469 & 0.650 \\
Weight on children $\beta$ & 1.801 & 1.571 \\
\midrule
Target completed fertility & 1.8727 & 1.7857 \\
Change: model & \multicolumn{2}{r}{-0.0664} \\
Change: data & \multicolumn{2}{r}{-0.0871} \\
Reproduced & \multicolumn{2}{r}{76\%} \\
\bottomrule
\end{tabular}
\end{threeparttable}
\begin{fullwidthtablenotes}
Every external input is common across the two
columns. The younger column recalibrates only the
preference-and-heterogeneity block ($\phi_1$, $\phi_2$, $\phi_3$,
$\beta$, $\sigma$, $\nu$, $\sigma_\kappa$, $\rho_\kappa$) on the younger
cohort's moments; the contest and technology block (the destination value,
effort and entry costs, the score technology, and the human-capital
technology) is frozen at its 1970--75 value. Frozen rows repeat by design.
The displayed RMSE covers nine retained benchmark targets and eight
retained younger-cohort targets; the common burden restriction and KEEP
intervals are not included in that displayed statistic.
\end{fullwidthtablenotes}\end{table}

\FloatBarrier

Table~\ref{tab:cohort_fit} shows that preference and heterogeneity changes can.
With the contest held fixed, the
model produces completed fertility of 1.780 against a target of 1.786 and
reproduces \vCohortRepro\ percent of the decline between cohorts. The model also
fits the eight retained targets closely, with a root mean squared
relative error of \vKTYrmseYoung\ percent. At the same time, the economic size
of the contest barely changes: contest outlays remain
close to five percent of income, the positional share remains near 0.78, and a
lottery would raise younger-cohort fertility by \vEffectYoung. Education
competition therefore remains a large drag on fertility in the younger cohort
even though it does not account for the decline between cohorts.

Could a changing contest explain the same decline instead? Table~\ref{tab:cohort_hyp}
recalibrates alternative parameter blocks on the younger cohort's moments under
the same objective and evaluation budget. The first three rows free one block at
a time from the older-cohort calibrated values. The last row starts from the
preference-and-heterogeneity fit and then allows all sixteen parameters to move.
The comparison points away from a fiercer race. Freeing the equally sized
contest-and-technology block produces a root mean squared error of 8.99 percent,
more than forty times the 0.21 percent obtained by changing preferences and
heterogeneity. It also selects a smaller destination value and an overall weaker
contest rather than the intensification this explanation requires. Changing choice
dispersion alone fits no better and leaves too many three-child families. Even
when all sixteen parameters are subsequently allowed to move, the fit improves
only from 0.21 to 0.17 percent and every contest parameter remains within 2.8
percent of its older-cohort value. The additional adjustment occurs in the same
preference parameters selected by the restricted fit.

\renewcommand{\floatpagefraction}{0.70}
\renewcommand{\textfraction}{0.08}
\begin{table}[htbp]
\centering\footnotesize
\caption{Which block moves across cohorts? Restricted recalibrations}
\label{tab:cohort_hyp}
\begin{threeparttable}
\begin{tabular}{lrrr}
\toprule
Freed block & Free & \shortstack{Cohort-target\\RMSE (\%)} & \shortstack{Completed\\fertility} \\
\midrule
Preferences and heterogeneity (adopted) & 8 & 0.21 & 1.7799 \\
Contest and technology & 8 & 8.99 & 1.7827 \\
Choice dispersion $\sigma$ alone & 1 & 9.00 & 1.8076 \\
All parameters (extended) & 16 & 0.17 & 1.7798 \\
\bottomrule
\end{tabular}
\end{threeparttable}
\begin{fullwidthtablenotes}
Each of the first three rows starts from the 1970--75 benchmark, freezes every
other parameter at that cohort's value, and recalibrates only the named block on
the 1976--81 moments, under the same objective and evaluation cap. Target
completed fertility is 1.7857. The displayed RMSE is computed over the
eight retained targets; the common burden restriction and KEEP
intervals remain in the objective. These are descriptive recalibrations
with a common evaluation cap.
The first row is the younger-cohort specification used
throughout the paper. The final row is continued from the first row's
solution for a further 3{,}000 evaluations under the same objective; at
the final endpoint no contest parameter moves more than 2.8 percent
from the benchmark.
\end{fullwidthtablenotes}\end{table}

Having located the change in the preference-and-heterogeneity block,
Table~\ref{tab:pref_decomp} shows which parts of that block matter. It substitutes
the younger cohort's calibrated parameter values into the older economy one
group at a time.
The weight on children falls by thirteen percent and lowers completed fertility
by 0.043. Lower parity tastes contribute another 0.040; all three fall, with the
largest movement at two children, consistent with the decline in the two-child
share. These forces are partly offset by narrower choice dispersion, which
raises fertility by 0.023 when changed alone. Leisure and the ability process
contribute little. The separate changes sum to $-0.056$, compared with a joint
effect of $-0.066$, leaving a modest interaction of $-0.010$.

\begin{table}[htbp]
\centering\footnotesize
\caption{Decomposing the preference block, one group at a time}
\label{tab:pref_decomp}
\begin{threeparttable}
\begin{tabular}{lrr}
\toprule
Group & Change in fertility & Share \\
\midrule
Fertility-choice dispersion $\sigma$ & $+0.023$ & $-35$\% \\
Weight on children $\beta$ & $-0.043$ & $+64$\% \\
Parity taste levels $\Phi(n)$ & $-0.040$ & $+61$\% \\
Leisure and ability process & $+0.004$ & $-6$\% \\
\midrule
Sum of separate effects & $-0.056$ & \\
Actual joint change & $-0.066$ & \\
\bottomrule
\end{tabular}
\end{threeparttable}
\begin{fullwidthtablenotes}
Each row substitutes one group of the younger
cohort's preference parameters into the older economy, holding everything
else at the benchmark. Shares are relative to the actual joint change;
the interaction term is $-0.010$.
\end{fullwidthtablenotes}\end{table}

The cohort comparison therefore separates the low level of fertility from its
continued decline. The education race remains a large and persistent cost of
children in both cohorts. What changes is the valuation of family size,
especially at higher parities. This pattern is consistent with
\citet{FangLiu2026}'s cultural account: the continuing importance of educational
achievement appears in the stable contest, while the weakening pronatalist norm
appears in lower family-size preferences.
\FloatBarrier
\renewcommand{\floatpagefraction}{0.50}
\renewcommand{\textfraction}{0.20}

\section{Conclusion}
\label{sec:conclusion}

This paper develops an assignment-externality account of South Korea's education
race. Families share aspirations for a small set of coveted careers, and
educational effort affects who receives them. When all families spend more on
positional preparation, the cutoff rises without changing aggregate assignment,
so the race raises the equilibrium cost of a child. In the KEEP data,
private-education
spending predicts no measurable improvement in college-entrance scores,
preferred-career entry, or wages at age~29. Combined with observed spending in
the KLIPS data, these
estimates imply that most observed private-education spending is positional in
the calibrated model. Replacing score-based assignment with a
capacity-preserving lottery raises completed fertility from \vContest\ to
\vLottery, an increase of 13 percent. The released resources account for almost
the entire response. Output per unit of labor falls by only
\vLotteryProdCostBenchmark\ percent. With the admission share held fixed, the
partial-accounting welfare gain is \vCEexTime\ percent of consumption, with the
largest gains accruing to households lower in the permanent-wage distribution.

Other policies do much less. A five-percent pronatal transfer raises completed
fertility by \vTransferFive, and education taxes change it by almost nothing.
The cohort comparison separates the persistence of the contest from the decline
across generations.
Holding the contest at its older-cohort configuration and recalibrating
preferences and household heterogeneity reproduces \vCohortRepro\ percent of the
decline for women born in 1976--81. The same contest remains quantitatively
important in both cohorts. The model leaves housing and school-district choices,
as well as the processes that
change aspirations and family-size preferences, outside the analysis. Bringing
marriage and birth timing together with housing, labor-market institutions, and
preference formation is a natural direction for future research.

Institutional reform need not wait for aspirations to change.
The assignment externality arises jointly from shared aspirations for scarce
careers and a score-based rule that allocates them.
Examination achievement and elite-career aspirations have
deep roots across East Asia and are unlikely to change quickly
\citep{Sorensen1994,Seth2002,FangLiu2026}. The full-lottery experiment holds
these aspirations fixed and eliminates the model's assignment externality by
severing the link between positional effort and destination assignment. As long
as fine score differences continue to affect assignment, families retain an
incentive to compete. Qualified lotteries weaken the race only when the
qualified pool is broad. Nor does capacity expansion necessarily relieve
the pressure. It can attract new contestants and intensify competition. The
medical-school expansion episode illustrates how quickly demand can pivot toward
a widened gate. Near-term reform should therefore target the score--assignment
link rather than capacity alone.

A second implication concerns the distinction between persistently low
fertility and its further decline. The same contest depresses fertility in both
cohorts, but a lower utility weight on children and lower parity tastes explain
most of the decline between them. These family-size preferences are distinct
from the shared career aspirations that generate the assignment externality.
Longer-run reform must therefore work on two margins. Stronger vocational
pathways, as advocated by \citet{FangLiu2026}, can diversify preferred
destinations. Better pay and social standing for technical and production
careers could make those pathways credible. Workplace, housing, and family
institutions that support marriage and parenthood may also gradually reshape
desired family size. Educational reform can lower the institutional cost of
children. Sustained fertility recovery will also depend on whether family life
becomes more attractive and feasible for younger generations.

\clearpage
\thispagestyle{empty}
\null
\vspace{0.12\textheight}
\begin{center}
{\LARGE\bfseries Supplemental Appendix for\par}
\vspace{1.5em}
{\Large ``The Race for Elite Destinations:\\
Education Competition and Low Fertility in Korea''\par}
\vspace{3em}
{\large Dongwoo Kim\par}
\vspace{1.5em}
{\large August 2026\par}
\end{center}
\vfill
\clearpage

\appendix
\renewcommand{\thesection}{\Alph{section}}
\renewcommand{\thesubsection}{\thesection.\arabic{subsection}}
\renewcommand{\thesubsubsection}{\thesubsection.\arabic{subsubsection}}
\numberwithin{equation}{section}
\numberwithin{table}{section}
\numberwithin{figure}{section}
\setcounter{footnote}{0}
\setcounter{proposition}{0}
\setcounter{lemma}{0}
\setcounter{corollary}{0}
\setcounter{definition}{0}
\renewcommand{\theproposition}{\thesection.\arabic{proposition}}
\renewcommand{\thelemma}{\thesection.\arabic{lemma}}
\renewcommand{\thecorollary}{\thesection.\arabic{corollary}}
\renewcommand{\thedefinition}{\thesection.\arabic{definition}}

\section{Data Construction}
\label{app:data}

This appendix documents the samples and variables behind the household moments,
education spending and time inputs, preferred-destination measures, and
private-education returns for the calibration of the quantitative model.
The data come from the Korean Labor and Income Panel
Study (KLIPS), the Private Education Expenditures Survey (PEES), the Korean Time Use
Survey (KTUS), the Korean Education and Employment Panel (KEEP), and official
workplace-preference and youth labor-force statistics.

\subsection{KLIPS Household Sample}
\label{app:klips_sample}

KLIPS has followed households and individuals annually since 1998. I use all
waves through wave 27 (2024) and update the sample construction in
\citet{KTY2024} (hereafter KTY). I retain married or cohabiting couple
households in which the woman was born in 1970--75 and appears at ages 40--43
in at least three waves, requiring valid observations for all calibration
variables. From the common set of valid observations, I construct household
income, labor supply, fertility, and education expenditure, average income over
the age window, and count children observed by age 43.
The resulting benchmark sample contains 770 households. Applying the same rules
to women born from 1976 through 1981 yields 1,129 households. Both cohorts span
the full age window in wave 27; women born in 1981 turn 43 in 2024.
Completed fertility is measured through age 43 in both cohorts. KTY
compute completed fertility before imposing the age filter, which gives the older
cohort more post-43 observations once later waves are added. I instead apply the
same age cap to each cohort, as fertility after 43 is small in both.
Education spending requires a different sample. Following KTY, I use the full
birth cohort whenever a resident child is aged 0--24, without imposing the
mother-age window or the three-observation rule. These moments trace the child's
education life cycle; restricting mothers to ages 40--43 would retain only one
part of it. Each sample rule is applied identically across cohorts.

\subsection{Fertility and Income}
\label{app:fertility_income_data}

Completed fertility counts children observed by the wife's 43rd birthday. I
also top-code families with four or more children at three to match the model's
fertility support, $\{0,1,2,3\}$.
The headline figures 1.873 and 1.786 are top-coded means, $p_1+2p_2+3p_{3+}$;
the uncoded means by age 43 are 1.890 and 1.798. The parity probabilities sum to one:
\begin{equation}
\widehat p_0+\widehat p_1+\widehat p_2+\widehat p_{3+}=1.
\end{equation}
The calibration targets $p_1$, $p_2$, and $p_{3+}$; childlessness is the residual
at the aggregate level. Bottom-quintile childlessness is targeted separately.

Permanent household income is the average CPI-adjusted household income in the
restricted age window. It includes labor and capital income of both partners and excludes
temporary transfers. Income quintiles are formed within each cohort.
The Gini coefficient is computed on this permanent-income measure.

Following KTY, I estimate the fertility--income elasticity from
permanent-income-quintile means rather than household observations. This
aggregation retains childless households, for whom log fertility is undefined.
Households are sorted into five equal-mass quintiles by permanent income; mean
fertility $\overline n_q$ and mean income $\overline y_q$ are formed within each
quintile; and the elasticity is the slope of
\begin{equation}
\log\overline n_q=a+\varepsilon_n\log\overline y_q+u_q,
\qquad q=1,\ldots,5 .
\label{eq:data_fertility_elasticity}
\end{equation}
Both $\overline n_q$ and the elasticity use the three-child top code, as does the
model analogue formed from the stationary distribution. The resulting targets
are 0.079 for the benchmark cohort and 0.068 for the younger cohort. Without the
top code they are 0.084 and 0.063. I retain the uncapped construction for the KTY
replication below because KTY do not top-code fertility. Fertility rises with
income in both cohorts. In the benchmark profile reported in
Table~\ref{tab:app_quintile_data}, work hours also rise with income, while
childlessness falls sharply.

\begin{table}[htbp]
\centering
\small
\caption{Benchmark KLIPS profiles by permanent-income quintile}
\label{tab:app_quintile_data}
\begin{threeparttable}
\begin{tabular}{lrrrrrr}
\toprule
Quintile & Fertility & Childless & One child & Two children & Three+ & Work hours \\
\midrule
1 & 1.773 & 0.0649 & 0.1883 & 0.6558 & 0.0909 & 0.2446 \\
2 & 1.857 & 0.0455 & 0.2078 & 0.5909 & 0.1558 & 0.2685 \\
3 & 1.864 & 0.0195 & 0.2273 & 0.6234 & 0.1299 & 0.3005 \\
4 & 1.877 & 0.0195 & 0.2078 & 0.6494 & 0.1234 & 0.3101 \\
5 & 1.994 & 0.0130 & 0.1818 & 0.6039 & 0.2013 & 0.3356 \\
\bottomrule
\end{tabular}
\end{threeparttable}
\begin{fullwidthtablenotes}
Work hours are spouse-average weekly hours divided by 100. Fertility is
top-coded at three children, as in the targeted elasticity. Quintile profiles
are not individually targeted.
\end{fullwidthtablenotes}
\end{table}

\subsection{Labor Supply}
\label{app:labor_data}

Weekly hours are regular main-job hours for wage workers and
average main-job hours for non-wage workers. Missing spouse hours are set to
zero. Couple labor supply is the average of the two spouses' weekly hours
divided by 100:
\begin{equation}
\widehat\ell_i
=\frac{\text{hours}_{i,m}+\text{hours}_{i,f}}{2\times100}.
\label{eq:data_hours}
\end{equation}
This normalization puts hours directly in the model's time-endowment units. The
targets are 0.292 for the benchmark cohort and 0.278 for the younger cohort.

\subsection{Education Expenditure}
\label{app:education_data}

KLIPS reports total education expenditure at the household level and
private-education expenditure at the child level. The lifecycle measure of total
education expenditure follows KTY's construction and restricts the sample to
households with one co-resident child, so the household total can be assigned to
that child. For each child age from birth to 24, mean total education expenditure
$\widehat X_a^{\mathrm{tot}}$ and mean after-tax household income $\widehat Y_a$
are computed among households whose only co-resident child is that age. I sum
the total-expenditure and income profiles separately and divide the former by
the latter:
\begin{equation}
\widehat m_x^{\mathrm{tot}}
=
\frac{\sum_{a=0}^{24}\widehat X_a^{\mathrm{tot}}}
{\sum_{a=0}^{24}\widehat Y_a}
=0.0883.
\label{eq:lifecycle_education_ratio}
\end{equation}
Taking the ratio after summation avoids giving excessive weight to ages with low
income. Private education accounts for 72.8 percent of total education
expenditure, implying a lifecycle private-education ratio of 0.0643. Because
public education is common across households, the model's household-specific
education inputs correspond to private-education spending. The calibration therefore
targets the private-education ratio.

KTY's investment--income elasticity uses total education expenditure and
imposes no one-child restriction.
Children are grouped into four schooling stages (ages 0--6, 7--12, 13--15,
16--18). Within each stage, mean total education expenditure per child and mean
income are computed by income quintile; the stage means are then averaged with
weights equal to the number of years in each stage (7, 6, 3, 3), and the
elasticity is the slope of log mean expenditure on log mean income across the
five quintiles. This construction reproduces KTY's published figure to within
0.5 percent and gives 0.659 in the benchmark cohort. It is not a calibration
target because it is defined over total education expenditure, whereas the
model includes only household-varying private outlays.

The younger cohort does not yet provide a representative child education life
cycle, so I report its education moments as descriptive rather than calibration
targets. The lifecycle spending ratio is right-truncated: the one-child cells at
ages 22, 23, and 24 contain only 17, 6, and 3 observations, compared with 146,
136, and 108 in the benchmark cohort. The KTY-style elasticity ends at high
school, but for the younger cohort the available secondary-school observations
are disproportionately drawn from women who gave birth relatively early. In the
high-school stage, 73 percent of observations correspond to births before
maternal age 29, compared with 47 percent in the benchmark cohort, and mean
maternal age at birth is 26.7 rather than 28.7.

Table~\ref{tab:app_stage_elasticity} shows that the younger cohort's lower
KTY-style elasticity is concentrated in middle and high school, where
its sample is thinner and selected toward early births. Across preschool and
elementary school, where both cohorts are well represented, the difference is
only 0.007, compared with 0.032 across all stages. I therefore treat the younger
cohort's education ratio of 0.082 and elasticity of 0.627 as descriptive; later
waves are needed for a full lifecycle comparison.

\begin{table}[htbp]
\centering
\small
\caption{KTY-style investment--income elasticity by schooling stage}
\label{tab:app_stage_elasticity}
\begin{threeparttable}
\begin{tabular}{lrrrrr}
\toprule
Stage & 1970--75 & Obs. & 1976--81 & Obs. & Change \\
\midrule
Preschool (0--6) & 0.553 & 5,849 & 0.545 & 7,787 & $-0.007$ \\
Elementary (7--12) & 0.614 & 7,062 & 0.610 & 7,481 & $-0.004$ \\
Middle (13--15) & 0.698 & 4,279 & 0.632 & 3,236 & $-0.065$ \\
High (16--18) & 0.861 & 3,950 & 0.761 & 1,910 & $-0.099$ \\
\midrule
All stages (7/6/3/3) & 0.659 & & 0.627 & & $-0.032$ \\
Preschool and elementary & 0.583 & & 0.576 & & $-0.007$ \\
\bottomrule
\end{tabular}
\end{threeparttable}
\begin{fullwidthtablenotes}
Elasticities are computed by the construction described above, applied one
stage at a time. Observation counts are child-year observations with non-missing
expenditure. Neither column enters the calibration objective; the
table reproduces the KTY-style construction and shows where the younger cohort's
truncation becomes consequential. Changes are computed from unrounded values.
\end{fullwidthtablenotes}
\end{table}

Table~\ref{tab:app_kty_replication} applies the same code to KTY's data under
their sample rules and reproduces all published moments within seven-tenths of
a percent. Their completed-fertility measure is the
whole-panel maximum taken before the age filter, and their work-hours measure
comes from a separate time-use file that pools birth cohorts and spouses aged
26--50. The last two columns apply this paper's common cohort rules to wave-27
data and report the resulting harmonized measurements.

\begin{table}[htbp]
\centering
\small
\caption{Replication of \citet{KTY2024} and wave-27 measurements}
\label{tab:app_kty_replication}
\begin{threeparttable}
\begin{tabular}{lrrrrr}
\toprule
Moment & KTY & Reproduced & Error & 1970--75 & 1976--81 \\
\midrule
One-child share & 0.196 & 0.196 & $-0.1$\% & 0.203 & 0.220 \\
Two-child share & 0.631 & 0.631 & $-0.0$\% & 0.625 & 0.568 \\
Three-or-more share & 0.144 & 0.144 & $+0.1$\% & 0.140 & 0.143 \\
Income Gini & 0.263 & 0.263 & $-0.2$\% & 0.263 & 0.241 \\
Work hours & 0.301 & 0.301 & $+0.0$\% & 0.292 & 0.278 \\
Fertility--income elasticity & 0.082 & 0.082 & $-0.6$\% & 0.079 & 0.068 \\
Bottom-quintile childlessness & 0.053 & 0.053 & $-0.7$\% & 0.065 & 0.097 \\
Education spending/income & 0.092 & 0.092 & $-0.3$\% & 0.088 & [0.082] \\
Investment--income elasticity & 0.698 & 0.695 & $-0.5$\% & 0.659 & [0.627] \\
\midrule
Completed fertility & 1.890 & 1.890 & $+0.0$\% & 1.873 & 1.786 \\
\bottomrule
\end{tabular}
\end{threeparttable}
\begin{fullwidthtablenotes}
``Reproduced'' applies the code of this paper to the data shipped with
\citet{KTY2024} under their own sample rules, including the uncapped completed
fertility of their cohort file and the separate 26--50 sample of their time-use
file. The last two columns apply the same code to KLIPS through wave 27 under
this paper's conventions, so the hours row compares two different samples by
construction. Bracketed entries are additionally right-truncated. Intergenerational
persistence has no cohort analogue in these data and is held at its external
value in both cohorts.
\end{fullwidthtablenotes}
\end{table}

\subsection{Income Distribution and Intergenerational Persistence}
\label{app:income_moments}

The weighted Lorenz curve for permanent household income yields Gini coefficients
of 0.263 in the benchmark cohort and 0.241 in the younger cohort. KLIPS does not
provide comparable cohort-specific estimates of intergenerational income persistence,
so I use KTY's target of 0.320 in both calibrations. The model counterpart is the
elasticity of children's realized adult earnings with respect to parental permanent
earnings, with the destination wage premium included in children's earnings.

\subsection{Preferred-Destination Measurement}
\label{app:destination_measurement}

\begin{table}[htbp]
\centering
\small
\caption{Operational definition of preferred destinations}
\label{tab:app_destination_definition}
\begin{threeparttable}
\begin{tabular}{p{0.27\textwidth}p{0.64\textwidth}}
\toprule
Component & KLIPS definition \\
\midrule
Licensed professions &
Employed or self-employed in KSCO three-digit codes 241, 242, 261, 271, or 272.
These codes cover selected medical, legal, accounting, and related professions. \\
Teachers &
Employed in KSCO codes 251 or 252, covering university and school teachers.
Academy instructors are excluded. \\
Government &
Regular wage worker at a government institution with a three-digit occupation
code below 400. Civil-service grade is unavailable. \\
Public institutions &
Regular wage worker at a public institution or public enterprise. \\
Large or foreign firms &
Regular wage worker at a private or foreign workplace with at least 300
employees. \\
\bottomrule
\end{tabular}
\end{threeparttable}
\begin{fullwidthtablenotes}
Occupations use the sixth revision of the Korean Standard
Classification of Occupations. Workplace and firm-size variables are from the
main-job questionnaire.
\end{fullwidthtablenotes}
\end{table}

Using the KLIPS wave-27 integrated person file, I classify an adult as being in a
preferred destination if the person falls into any of the five categories in
Table~\ref{tab:app_destination_definition}. If more than one category applies,
I assign the person to the first applicable category in the order shown and
count the person only once. Using KLIPS cross-sectional weights, I measure capacity as the share of adults
ages 30--45 who fall into at least one preferred-destination category. The
denominator includes wage workers, self-employed workers, and nonemployed adults:
\begin{equation}
\widehat q_E
=
\frac{\sum_i\omega_i\1\{D_i=1\}}
{\sum_i\omega_i}
=0.1845.
\label{eq:capacity_measure}
\end{equation}
The current-wage premium is estimated among wage workers ages 25--54 with positive
pay and hours:
\begin{equation}
\log w_i^{\mathrm{hour}}
=
\delta_E D_i
+\sum_{k=1}^{4}\alpha_k\text{age}_i^k
+\gamma\,\text{female}_i
+\sum_j\psi_j\1\{\text{education}_i=j\}
+u_i.
\label{eq:wage_premium_regression}
\end{equation}
The regression uses KLIPS cross-sectional weights and heteroskedasticity-robust
standard errors. The estimate is
\begin{equation}
\widehat\delta_E=0.25739,
\qquad \operatorname{se}(\widehat\delta_E)=0.01416.
\end{equation}
The 95 percent confidence interval is $[0.2296,0.2851]$.

As a mover-based check for the large-firm component, I use KLIPS waves 22--27
and retain workers who are ages 25--54 when first observed. I classify each
worker's first observed job change. The treated group moves from a wage
job outside the category to a regular job at a firm with at least 300 employees;
the comparison group moves between wage jobs outside the category. The sample
contains 61 large-firm movers and 1{,}203 comparison movers. Let $T_i$ denote
the move wave, let $M_i$ indicate a large-firm mover,
and define
$\mathcal K=\{-4,-3,-2,0,1,2,3,4\}$, with $k=-1$ the omitted reference period. I estimate
\begin{equation}
\log w_{it}^{\mathrm{hour}}
=
\mu_i+\lambda_t
+\sum_{k\in\mathcal K}\tau_k\1\{t-T_i=k\}
+\sum_{k\in\mathcal K}\beta_k M_i\1\{t-T_i=k\}
+u_{it},
\label{eq:mover_wage_regression}
\end{equation}
where $\mu_i$ and $\lambda_t$ are person and wave fixed effects and standard
errors are clustered by person. The average post-move interaction is
$\frac{1}{5}\sum_{k=0}^{4}\widehat\beta_k=0.0759$, with a standard error of
0.0392 and a 95 percent confidence interval of $[-0.0010,0.1527]$.

\subsection{Time Inputs from the Korean Time Use Survey}
\label{app:ktus}

The time-use statistics in Section~\ref{sec:fertility_education} come from
KTUS microdata spanning 2004--2024. I convert diary minutes to weekly hours by
weighting weekday, Saturday, and Sunday observations by five, one, and one and
applying the survey's diary weights. Out-of-school study includes private-academy
instruction, self-study outside school, and other out-of-school study;
study-related travel uses the survey's learning-related travel category.
Household income bands come from the household-information file. I map the 2004
activity codes into the same categories used in later waves. KTUS records how
children spend their time but not whether study builds skills or improves their
rank in the admission contest, so the model determines that split.

\subsection{KEEP Sample Accounting}
\label{app:keep_sample}

Of the 2{,}000 students in the 2004 KEEP cohort, 574 report positive earnings in
2015 and enter the wage regression of Section~\ref{sec:keep_returns}. On
the regression's baseline covariates,
they closely resemble the excluded students: the differences are $0.01$
pooled standard deviations in log household income, $0.07$ in log
private-education spending, and $0.13$ in teacher-assessed achievement.
Selection into the College Scholastic Ability Test (CSAT) sample is more
pronounced. CSAT takers report log private-education spending $0.37$ standard
deviations above non-takers, so the score estimates describe a more heavily
investing subset of the cohort.

\subsection{Private-Education Participation and the Burden Gradient}
\label{app:participation}

Participation is measured as positive private-education spending in a
household-year. The sample contains 1{,}579 households in the 1970--75 cohort
and 18{,}136 household-year observations with a resident child aged 24 or
younger and positive income. No maternal-age restriction is imposed. Income
quintiles are based on each household's mean income over these observations.
Participation rises from 0.572 in the bottom income quintile to 0.850 in the
top. The full profile is 0.572, 0.722, 0.791, 0.835, and 0.850 across quintiles.
The sample participation rate is 0.766, and the top-to-bottom ratio is 1.486. These
statistics are untargeted because positive private-education spending does not
distinguish productive education from contest entry. Across the same
quintiles, the mean ratios of private-education spending to income are 0.130, 0.110, 0.112,
0.121, and 0.100. Their bottom-to-top ratio of 1.297 is the targeted burden
gradient.

\section{Proofs and Extensions of the Analytical Model}
\label{app:proofs}

This appendix develops the analytical results used in Sections~\ref{sec:analytical}
and~\ref{sec:education_taxes}. Unless stated otherwise, it maintains the
analytical environment of Section~\ref{sec:analytical}. The results
formalize private effort, the planner's allocation, the capacity limits, and the
family-size wedge. The final subsection introduces a partial-equilibrium
extension comparing a broad education-spending tax with an oracle tax on
positional outlays.

\subsection{Environment}
\label{app:proof_environment}

A family with state $(a,y)$ and parity $n\geq1$ chooses productive education $x$
and positional effort $e$ per child, with consumption $c=y-nx-nC(e)$,
$C(0)=C'(0)=0$, and $C'(e),C''(e)>0$ for $e>0$. The child's score is
$z=a+\theta\log h(x)+g(e)+\varepsilon$ with $h'>0$, $h''<0$, $\theta>0$,
$g'>0$, and $\varepsilon\sim F_\varepsilon$ with density
$f_\varepsilon$. A child is admitted when $z\geq v$, so
\begin{equation}
p=1-F_\varepsilon(\zeta),
\qquad
\zeta(x,e)\equiv v-a-\theta\log h(x)-g(e).
\label{app-eq:admission}
\end{equation}
The destination is worth $\Delta=\delta+B>0$ to the parent. Conditional on
parity, the objective is
$u(c)+\Phi(n)+\beta n\{u_c(h(x))+\Delta p\}$ with $u'>0$, $u''<0$, $u_c'>0$,
$u_c''\leq0$. Differentiating in $e$ and dividing by $n$ gives the interior
condition
\begin{equation}
u'(c)\,C'(e)
=
\beta\Delta\,f_\varepsilon(\zeta(x,e))\,g'(e).
\label{app-eq:effort-foc}
\end{equation}
The cutoff clears a fixed capacity: the birth-weighted share of admitted
children equals $q\in(0,1)$.
In a symmetric equilibrium, capacity clearing requires
\begin{equation}
v=a+\theta\log h(x)+g(e)+F_\varepsilon^{-1}(1-q).
\label{app-eq:symmetric_cutoff}
\end{equation}

\subsection{Private Effort and the Planner}
\label{app:proof_allpay}

\begin{lemma}[All-pay effort]
\label{lem:app_allpay}
Suppose $C'(0)=0$, $g'(0)>0$, and
$f_\varepsilon(\zeta(x,0))>0$. If $\Delta>0$, any optimal effort choice satisfies
$e>0$. Under capacity-preserving lottery assignment the unique effort choice is
$e=0$, while productive education may remain positive.
\end{lemma}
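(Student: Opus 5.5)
The plan is a first-order argument at $e=0$, holding $x$ and $n$ fixed at their optimal values. Conditional on parity, write the family's objective as a function of $e$ alone,
\[
G(e)\equiv u\bigl(y-nx-nC(e)\bigr)+\beta n\,\Delta\bigl[1-F_\varepsilon(\zeta(x,e))\bigr],
\]
omitting terms that do not depend on $e$. I will assume an optimum exists and that consumption stays positive at $e=0$, i.e.\ $y-nx>0$; otherwise $u'(c)$ is undefined there.

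\textbf{Step 1: the derivative at zero effort.} Differentiating, and using $\partial\zeta/\partial e=-g'(e)$,
\[
G'(e)=-n\,u'(c)\,C'(e)+\beta n\Delta\, f_\varepsilon(\zeta(x,e))\,g'(e).
\]
At $e=0$ the first term vanishes because $C'(0)=0$. The second term is strictly positive because $\beta>0$, $n\geq1$, $\Delta>0$, $f_\varepsilon(\zeta(x,0))>0$ and $g'(0)>0$. Hence $G'(0)>0$.

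\textbf{Step 2: ruling out $e=0$.} Suppose $e=0$ were optimal, given the optimal $x$. By continuity of $f_\varepsilon$, $g'$ and $C'$, we have $G'>0$ on some interval $[0,\bar e)$. So a small $e>0$ raises $G$, contradicting optimality. Any optimum therefore has $e>0$, and if it is interior it satisfies \eqref{app-eq:effort-foc}.

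\textbf{Step 3: the lottery case.} Under the lottery, $p=q$ for every child, independent of $(x,e)$. The assignment term is then the constant $\beta n\Delta q$. Effort now enters the objective only through $u(y-nx-nC(e))$. Because $u'>0$ and $C'(e)>0$ for $e>0$, this term is strictly decreasing in $e$ on $(0,\infty)$, and $C(e)>C(0)=0$ for $e>0$. So $e=0$ is the unique maximizer, for every choice of $x$. Productive education still enters through $\beta n u_c(h(x))$. Its first-order condition is
\[
n\,u'(c)=\beta n\,u_c'(h(x))\,h'(x).
\]
This can hold with $x>0$ whenever $\beta u_c'(h(0))h'(0)>u'(y)$. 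That condition is satisfied, for example, under an Inada-type condition on $h$. This establishes that productive education \emph{may} remain positive, which is all the statement claims.

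\textbf{Main obstacle.} The delicate point is Step 2: $x$ is chosen jointly with $e$, so the argument must fix $x$ at its optimum. It also needs $G'$ to be continuous near $0$, which relies on $f_\varepsilon$ being continuous; the assumption that $F_\varepsilon$ is continuously differentiable provides this. The deviation is one-sided and needs only $G'(0)>0$, so no concavity of $G$ is required.
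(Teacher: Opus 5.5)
Your proposal is correct and follows essentially the same route as the paper. The derivative of the objective at $e=0$ reduces to $\beta n\Delta f_\varepsilon(\zeta(x,0))g'(0)>0$ because $C'(0)=0$, and under the lottery the effort derivative is $-nu'(c)C'(e)<0$ for $e>0$, while $x$ keeps its return through $\beta n\,u_c'(h)h'(x)$. Your continuity step in Step 2 is harmless but unnecessary, since a strictly positive right derivative at $e=0$ already gives an improving deviation; your sufficient condition for $x>0$ just spells out the paper's ``may remain positive'' claim.
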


\emph{Proof.} At zero effort, the marginal resource cost vanishes and
\begin{equation}
\left.\frac{\partial V}{\partial e}\right|_{e=0}
=\beta n\Delta f_\varepsilon(\zeta(x,0))g'(0)>0.
\end{equation}
Zero effort therefore cannot be optimal. Any optimum has $e>0$.
Under the lottery, admission is $q$ at every effort level, so
$\partial V^{L}/\partial e=-nu'(c)C'(e)<0$ for $e>0$ and the unique optimum is
$e^{L}=0$. Productive education retains the term
$\beta n\,u_c'(h)h'(x)$, which does not depend on the assignment rule, so $x$ need
not fall to zero. \qed

The assumptions say that, holding the cutoff fixed, a unilateral increase in
effort from zero raises the family's admission probability. Since $C'(0)=0$,
the marginal cost of effort is zero at $e=0$. The admission response is zero
under lottery assignment because the score no longer affects assignment. A common
increase in effort is instead absorbed by the cutoff as in
Lemma~\ref{lem:absorption}.

\begin{corollary}[Dissipation of the common component]
\label{cor:planner}
A planner who preserves destination assignments and values consumption, human
capital, and destination outcomes sets the common positional component of effort
to zero.
\end{corollary}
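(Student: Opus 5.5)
The argument combines Lemma~\ref{lem:absorption} with a direct comparison of the planner's objective across common effort levels. I would set up the planner's problem in the symmetric environment of Appendix~\ref{app:proof_environment}. The planner chooses a common pair $(x,e)$ per child, subject to the resource constraint $c=y-nx-nC(e)$. She must respect capacity clearing, which means the cutoff is given by \eqref{app-eq:symmetric_cutoff} rather than taken as fixed. Her objective is $u(c)+\Phi(n)+\beta n\{u_c(h(x))+\Delta p\}$.

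The first step substitutes the equilibrium cutoff into the admission probability. By Lemma~\ref{lem:absorption}, $\zeta(x,e)=F_\varepsilon^{-1}(1-q)$ for every common $(x,e)$. Hence $p=q$ identically, and the assignment term $\beta n\Delta q$ does not depend on $e$ (or on $x$). The phrase ``preserves destination assignments'' is interpreted exactly as this constraint: the planner cannot change who is admitted, only the common score level.

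The second step reads off the reduced objective. It is $u(y-nx-nC(e))+\Phi(n)+\beta n\{u_c(h(x))+\Delta q\}$, in which $e$ enters only through the resource cost. Differentiating in $e$ gives $-nu'(c)C'(e)$. This derivative is strictly negative for $e>0$, because $u'>0$ and $C'(e)>0$ there. At $e=0$ it equals zero because $C'(0)=0$. So for any fixed $x$ the reduced objective is strictly decreasing in $e$ on $[0,\infty)$, and the unique maximizer of the common component is $e=0$. Productive education is still chosen from $u'(c)=\beta u_c'(h(x))h'(x)$, so it need not be zero; this matches the lottery allocation of Lemma~\ref{lem:app_allpay}. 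A comparison with \eqref{app-eq:effort-foc} then isolates the wedge: the private optimum counts a benefit $\beta\Delta f_\varepsilon g'$ that the planner recognizes as absorbed by the cutoff.

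The main obstacle is making precise what ``common component'' means when families are heterogeneous in $(a,y)$. I would handle it by decomposing each family's effort contribution as $g(e_i)=\bar g+\tilde g_i$. A uniform shift in $\bar g$ moves the cutoff one-for-one (the same argument as Lemma~\ref{lem:absorption}, applied to the capacity condition with the shift pulled out). That shift therefore leaves every $\zeta_i$, and hence every assignment, unchanged. Lowering $\bar g$ until the smallest effort reaches zero is then feasible and strictly saves resources. If that extension becomes messy, I would state the result for the symmetric case, which is the environment where the corollary is asserted.
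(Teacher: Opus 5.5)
Your proof is correct and follows essentially the paper's route: Lemma~\ref{lem:absorption} fixes $p=q$ for any common effort, so removing the common component saves $nC(e)$ in resources without changing human capital or assignments. The paper states this saving directly, whereas you establish it from the sign of the derivative of the reduced objective; your heterogeneous-family extension, which shifts every $g(e_i)$ down uniformly until the smallest effort reaches zero, goes beyond the paper's symmetric argument and is sound.
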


\emph{Proof.} Once the cutoff adjusts, the cutoff-absorption result in
Lemma~\ref{lem:absorption} of the main text implies that the common
component of $e$ has no assignment benefit. Removing it raises
consumption by $nC(e)$ without changing human capital or assignments, and hence
raises welfare whenever $e>0$. The result does not extend to productive
education because reducing $x$ lowers every child's human capital. \qed

\subsection{Capacity Limits in the Analytical and Quantitative Models}
\label{app:proof_value_scarcity}

\begin{lemma}[Capacity limits]
\label{lem:app_value_scarcity}
Under the conditions of Lemma~\ref{lem:app_allpay}, suppose $\Delta>0$ and
$f_\varepsilon(F_\varepsilon^{-1}(1-q))\to0$ as $q$ approaches zero or one. Then
equilibrium effort and the associated resource distortion vanish at both capacity
limits and can attain a maximum at an interior $q$.
\end{lemma}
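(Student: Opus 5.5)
The plan is to work in the symmetric capacity-constrained equilibrium of Section~\ref{sec:analytical} and substitute the cutoff condition \eqref{app-eq:symmetric_cutoff} into the effort first-order condition \eqref{app-eq:effort-foc}. In a symmetric equilibrium the standardized margin $\zeta(x,e)$ equals $\zeta_q\equiv F_\varepsilon^{-1}(1-q)$ at the equilibrium choices, so the private effort condition becomes
\begin{equation*}
u'\bigl(y-nx-nC(e)\bigr)\,C'(e)=\beta\Delta\,f_\varepsilon(\zeta_q)\,g'(e).
\end{equation*}
The right-hand side depends on capacity only through the scalar $f_\varepsilon(\zeta_q)$. By Lemma~\ref{lem:app_allpay}, equilibrium effort $e^*(q)$ is strictly positive whenever $f_\varepsilon(\zeta_q)>0$, so it is characterized by this interior condition. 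I take $x$ and $n$ at their equilibrium values. These are bounded, since $x$ is interior and resources are finite.

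The first step is to show $e^*(q)\to0$ as $f_\varepsilon(\zeta_q)\to0$. The left-hand side is at least $u'(y)\,C'(e)$ because $c\le y$ and $u$ is concave, and $C'(e)$ is strictly increasing with $C'(0)=0$. The right-hand side is bounded by $\beta\Delta f_\varepsilon(\zeta_q)\sup g'$ on a compact effort range. I would restrict effort to $[0,\bar e]$ with $\bar e$ the level at which $nC(\bar e)=y$, and assume $g'$ is continuous there. This gives
\begin{equation*}
C'(e^*(q))\le \frac{\beta\Delta\,\max_{[0,\bar e]}g'}{u'(y)}\,f_\varepsilon(\zeta_q)\longrightarrow 0,
\end{equation*}
so $e^*(q)\to0$ by continuity and strict monotonicity of $C'$. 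The resource distortion is $nC(e^*)$, the consumption the planner of Corollary~\ref{cor:planner} would recover. It then vanishes as well, because $C$ is continuous with $C(0)=0$ and $n$ is bounded. The hypothesis gives $f_\varepsilon(\zeta_q)\to0$ at both $q\to0$ and $q\to1$, and the argument is identical at each limit.

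The second step is the interior maximum. For $q\in(0,1)$ with $f_\varepsilon(\zeta_q)>0$, Lemma~\ref{lem:app_allpay} gives $nC(e^*(q))>0$. The distortion therefore extends continuously by zero to the endpoints and is strictly positive inside. If $q\mapsto e^*(q)$ is continuous, the extreme value theorem on $[0,1]$ gives a maximum, and because the endpoint values are zero and interior values are positive, the maximum lies at an interior $q$. This matches the hedged claim that the distortion \emph{can} attain an interior maximum.

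The main obstacle is continuity of $e^*(q)$, since $x$ (and $n$, if endogenous) jointly respond to $q$. I would first try the implicit function theorem on the joint first-order conditions at fixed $n$, using strict concavity in $(x,e)$ to secure a nonsingular Jacobian. If that is too strong, I would fall back on Berge's maximum theorem applied to the fixed-$n$ problem with $\zeta=\zeta_q$ treated as a parameter. Uniqueness of the optimum then yields a continuous policy, and the limit in the first step already holds uniformly because the bound does not depend on $x$.
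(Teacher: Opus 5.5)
Your proposal is correct and follows the paper's proof. Like the paper, you substitute the capacity-clearing cutoff into the effort condition, use $u'(c)\ge u'(y)>0$, let the vanishing density force $C'(e^*)\to0$ and hence $e^*\to0$, and obtain the interior peak from interior positivity. Your compact-range bound on $g'$ tightens the paper's looser appeal to $g'(e)<\infty$, but your continuity and extreme-value step proves more than the hedged ``can attain'' requires and rests on strict concavity or uniqueness that the stated conditions do not deliver (for instance, concavity of $g$ is not assumed), so you can simply drop it as the paper does.
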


\emph{Proof.} Substituting equation~\eqref{app-eq:symmetric_cutoff} into
equation~\eqref{app-eq:effort-foc}, the interior symmetric condition is
\begin{equation}
u'(c)C'(e)
=\beta\Delta\,f_\varepsilon\!\left(F_\varepsilon^{-1}(1-q)\right)g'(e).
\label{app-eq:symmetric_effort_foc}
\end{equation}
As $q$ approaches either limit, the right-hand side converges to zero by
assumption. Since $c\leq y$ and $u$ is concave, $u'(c)\geq u'(y)>0$.
Continuity with $C'(e)>0$ for $e>0$ and $g'(e)<\infty$ therefore implies
$e\to0$.
Effort is positive at interior capacities with positive density, so both effort
and its resource cost can peak at an interior $q$. The location of that peak
depends on preferences, costs, and the score distribution. \qed

Lemma~\ref{lem:app_value_scarcity} uses the additive score of the analytical
model. The quantitative model's normalized logistic rule changes the lower
capacity boundary.

\begin{corollary}[Vanishing private return at the capacity boundaries]
\label{cor:qlimit}
Let admission follow the quantitative model's rule
$P_E(S;v)=\Lambda\!\left((S-v)/(\sigma_s v)\right)$, with $\Lambda$ the logistic
function, $S>0$, and $v>0$. The symmetric rule can implement shares
$q\in(q_{\min},1)$, where $q_{\min}=\Lambda(-1/\sigma_s)$. Holding the cutoff
fixed, the private marginal return of the score at a symmetric point is
\begin{equation}
\frac{\partial P_E}{\partial S}
=\frac{q(1-q)\left[1+\sigma_s\operatorname{logit}(q)\right]}{\sigma_s S},
\label{app-eq:qlimit}
\end{equation}
which converges to zero as $q\to1$ and as $q\downarrow q_{\min}$.
\end{corollary}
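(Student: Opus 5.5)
The plan is to work entirely at a symmetric point, where every child has the same score $S>0$ and faces the same cutoff $v>0$. The admission probability is then simply $q=\Lambda(z)$ with $z\equiv (S-v)/(\sigma_s v)$. The whole argument reduces to a change of variables from $(S,v)$ to $(S,q)$, followed by two one-dimensional limits.

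\emph{Step 1 (implementable shares).} Write $z=(S/v-1)/\sigma_s$. As $v$ ranges over $(0,\infty)$ with $S$ fixed, the ratio $S/v$ ranges over $(0,\infty)$. Hence $z$ ranges over $(-1/\sigma_s,\infty)$. Since $\Lambda$ is continuous and strictly increasing, $q=\Lambda(z)$ ranges exactly over $(\Lambda(-1/\sigma_s),1)=(q_{\min},1)$. I will note that $q\to1$ corresponds to $v\downarrow0$ and $q\downarrow q_{\min}$ corresponds to $v\to\infty$. Conversely, each $q$ in that interval pins down $S/v=1+\sigma_s\operatorname{logit}(q)>0$.

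\emph{Step 2 (marginal return).} Holding $v$ fixed, differentiate to get $\partial P_E/\partial S=\Lambda'(z)/(\sigma_s v)$, and use $\Lambda'=\Lambda(1-\Lambda)=q(1-q)$ at the symmetric point. Then eliminate $v$ using Step 1: $1/v=\bigl[1+\sigma_s\operatorname{logit}(q)\bigr]/S$. Substituting yields equation~\eqref{app-eq:qlimit} directly.

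\emph{Step 3 (limits).} Fix $S$ and $\sigma_s$.
\begin{itemize}
\item As $q\downarrow q_{\min}$, $\operatorname{logit}(q)\to-1/\sigma_s$, so the bracket tends to zero while $q(1-q)$ stays bounded. The expression therefore vanishes.
\item As $q\to1$, the bracket diverges, so this is the one step needing care, and I expect it to be the main obstacle. I will expand $q(1-q)\bigl[1+\sigma_s\operatorname{logit}(q)\bigr]=q(1-q)+\sigma_s q(1-q)\log q-\sigma_s q(1-q)\log(1-q)$. The first two terms clearly tend to zero. The last tends to zero by the standard limit $t\log t\to0$ as $t\downarrow0$, applied with $t=1-q$.
\end{itemize}
Interpreting the statement as ``holding the score level $S$ fixed'' (equivalently, scaling by $S$), this completes the proof. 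I will remark that the mechanism differs at the two boundaries. At $q\to1$ the logistic density term $q(1-q)$ vanishes. At $q_{\min}$ the cutoff diverges, so the normalization $\sigma_s v$ kills the slope.
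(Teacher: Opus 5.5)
Your proposal is correct and follows essentially the same route as the paper's proof. Both invert the symmetric condition to get $v=S/[1+\sigma_s\operatorname{logit}(q)]$, differentiate with the cutoff held fixed using $\Lambda'=q(1-q)$, and take the two limits (bracket $\to0$ at $q_{\min}$, and $(1-q)\operatorname{logit}(q)\to0$ at $q\to1$). Your Step 1 and your explicit $t\log t$ expansion just spell out details the paper states more briefly.
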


\emph{Proof.} At the symmetric point, $\Lambda((S-v)/(\sigma_s v))=q$ gives
$(S-v)/(\sigma_s v)=\operatorname{logit}(q)$ and hence
$v=S/[1+\sigma_s\operatorname{logit}(q)]$. Treating $v$ as fixed,
$\partial P_E/\partial S=\Lambda'\cdot(\sigma_s v)^{-1}$ with
$\Lambda'=q(1-q)$, which on substituting for $v$ yields
equation~\eqref{app-eq:qlimit}. As $q\to1$, $q(1-q)\to0$ while
$(1-q)\operatorname{logit}(q)\to0$ as well, so the numerator vanishes; the
derivative therefore converges to zero. At the lower boundary,
$1+\sigma_s\operatorname{logit}(q)\to0$ while $q(1-q)$ remains finite, so the
derivative again vanishes. Shares below $q_{\min}$ are not attainable with
positive $S$ and $v$ under this normalization: as $v\to\infty$, the logistic
argument approaches $-1/\sigma_s$. \qed

\subsection{The Family-Size Wedge}
\label{app:proof_fertility_wedge}

Section~\ref{sec:analytical_fertility} describes the dilution channel in
prose. Here I state it formally. For a family state $a$ at parity $n$,
write $W_n(a)\equiv n\,p_n(a;v)$ for the expected number of children who receive
the preferred destination.

\begin{proposition}[Contest-induced fertility wedge]
\label{prop:fertility_wedge}
Evaluated at the optimized choices for each parity, the destination-value
contribution to the discrete gain from $n$ to $n+1$ children is
$\beta\Delta[W_{n+1}(a)-W_n(a)]$ under the contest and $\beta\Delta q$ under the
capacity-preserving lottery. The contest lowers the marginal value of child $n+1$
whenever $W_{n+1}(a)-W_n(a)<q$. Moreover, $W_{n+1}(a)<W_n(a)$ if
$p_{n+1}(a;v)/p_n(a;v)<n/(n+1)$.
\end{proposition}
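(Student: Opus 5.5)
The plan is to decompose the parity-conditional value function of the analytical environment into its destination-value component and the remainder, and to compare that component across the two assignment rules. Fix the family state $a$ and the cutoff $v$, which each family takes as given. Write $V_n(a)$ for the maximized objective at parity $n$, attained at the optimized pair $(x_n,e_n)$ with admission probability $p_n(a;v)=1-F_\varepsilon(\zeta(x_n,e_n))$. At these choices the objective splits as
\begin{equation*}
V_n(a)=u(y-nx_n-nC(e_n))+\Phi(n)+\beta n\,u_c(h(x_n))+\beta\Delta\,n\,p_n(a;v).
\end{equation*}
The last term equals $\beta\Delta W_n(a)$ by the definition of $W_n$. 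The discrete gain $V_{n+1}(a)-V_n(a)$ therefore contains the additive piece $\beta\Delta[W_{n+1}(a)-W_n(a)]$. This is what the statement calls the destination-value contribution, evaluated at the optimized choices for each parity, so no envelope argument is needed.

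Under the capacity-preserving lottery, every child is admitted with probability $q$ regardless of $(x,e)$. Lemma~\ref{lem:app_allpay} gives $e^L=0$, but that result is not needed for this term. Hence $p_n^L=q$ for every $n$, so $W_n^L=nq$ and the corresponding contribution is $\beta\Delta[(n+1)q-nq]=\beta\Delta q$.

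Comparing the two contributions, and using $\beta>0$ and $\Delta>0$, the contest's contribution is smaller exactly when $W_{n+1}(a)-W_n(a)<q$. This is the claimed wedge. One caveat should be stated in the write-up: this compares only the destination-value component, holding fixed the consumption and human-capital components. Those components differ across regimes because $e_n>0$ under the contest. The proposition's claim that the contest lowers the marginal value of child $n+1$ is therefore read component-wise, which is how Section~\ref{sec:analytical_fertility} uses it.

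For the final claim, $W_{n+1}<W_n$ is equivalent to $(n+1)p_{n+1}<np_n$. Provided $p_n(a;v)>0$, which holds whenever $f_\varepsilon$ has full support, dividing by $(n+1)p_n$ gives $p_{n+1}/p_n<n/(n+1)$, and the step reverses, so the two conditions are equivalent and in particular the stated sufficiency holds.

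The only delicate point is interpreting ``contribution'' correctly. I will anchor it to the additive separability of the objective in $\beta\Delta\,np$, and state explicitly that the choices are reoptimized at each parity rather than held fixed. That way the result does not depend on any monotonicity of $p_n$ in $n$, which the analytical model does not pin down.
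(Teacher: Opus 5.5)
Your proposal is correct and follows the paper's proof essentially step for step. You isolate the term $\beta\Delta\,np_n=\beta\Delta W_n$ in optimized utility, difference it across parities, use $p^L=q$ under the lottery to get $\beta\Delta q$, and obtain the last claim by rearranging $(n+1)p_{n+1}<np_n$ using $p_n(a;v)>0$; your caveat that the comparison concerns only the destination-value component matches how the paper's proof reads ``the contest contribution is smaller.''
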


\emph{Proof.} The destination-value part of optimized utility is
$\beta\Delta W_n(a)$. Its contribution to the discrete utility change from $n$ to
$n+1$ is $\beta\Delta\{W_{n+1}(a)-W_n(a)\}$. Under the lottery each child wins
with probability $q$, so the corresponding term is
$\beta\Delta\{(n+1)q-nq\}=\beta\Delta q$. The contest contribution is smaller
precisely when $W_{n+1}(a)-W_n(a)<q$, proving the first statement. For the
second,
\begin{align*}
W_{n+1}(a)<W_n(a)
&\Longleftrightarrow (n+1)p_{n+1}(a;v)<np_n(a;v)\\
&\Longleftrightarrow
\frac{p_{n+1}(a;v)}{p_n(a;v)}<\frac{n}{n+1},
\end{align*}
where the last equivalence uses $p_n(a;v)>0$. \qed

\subsection{Broad and Oracle-Targeted Education Taxes}
\label{app:tax_derivation}

Observed private-education spending combines productive education and positional
outlays. Let $\tau$ denote a broad tax on both components and let $\tau^{P}$
denote an infeasible oracle tax on positional outlays alone. To compare their
direct effects, hold family size and the admission cutoff fixed, condition on
contest participation, and let utility be linear in consumption. Omitting terms
that do not vary with $x$ or $e$, the family chooses
\begin{align}
\max_{x,e\geq0}\quad
y-n(1+\tau)x-n(1+\tau+\tau^{P})C(e)
+\beta n\{u_c(h(x))+\Delta p(x,e)\}.
\label{app-eq:tax_objective}
\end{align}
The interior first-order conditions are
\begin{align}
1+\tau
&=\beta\{u_c'(h)h'(x)+\Delta p_x(x,e)\},
\label{app-eq:broad_tax_x_foc}\\
(1+\tau+\tau^{P})C'(e)
&=\beta\Delta p_e(x,e).
\label{app-eq:targeted_tax_e_foc}
\end{align}
Setting $\tau^{P}=0$ gives the broad spending tax, while setting $\tau=0$
gives the oracle tax. The oracle raises the price of an input whose only return
is assignment without directly taxing productive education. Productive
education continues to raise human capital and child value, as reflected in
$u_c'(h)h'(x)$. The oracle therefore isolates the strongest case
for targeted taxation. Section~\ref{sec:education_taxes} solves the full
rebated general equilibrium, including endogenous contest entry and the fixed
fee. Once the cutoff re-clears, both experiments have near-zero fertility
effects. It also reports a revenue-funded package that fixes the broad tax at
20 percent, sets the lump-sum adjustment to zero, and uses all tax revenue for
an equal transfer per child.

\section{Quantitative Model: Additional Detail}
\label{app:model_details}

This appendix completes the quantitative model in Section~\ref{sec:model}.
It states the conditional household problem, constructs the intergenerational
transition, and describes the equilibrium algorithm used in the calibration.

\subsection{Conditional Household Problem}
\label{app:conditional_household}

For a state $s=(h,\kappa_p)$, let
$\Pi_\kappa(\kappa'|\kappa_p)$ denote the probability of child ability state
$\kappa'$ conditional on parental family-ability state $\kappa_p$. Given a
proposed pair $(v,T)$, the household solves a continuous problem for every
$n\in\{1,2,3\}$ and participation state $r\in\{0,1\}$:
\begin{align}
V_{nr}(s;v,T)
=\max_{x,e,\ell}\quad&
\log\!\left(\frac{1.5c_{nr}}{1.5+0.3n}\right)
+\nu\frac{(1-\ell-n\lambda)^{1-\gamma}-1}{1-\gamma}
+\Phi(n) \notag\\
&+\beta n\sum_{\kappa'}\Pi_\kappa(\kappa'|\kappa_p)
\Bigl[
u_c\!\bigl(y_0'(x,\kappa')\bigr)
+P_E\!\bigl(S(h'(x,\kappa'),e);v\bigr)\,\Delta u(x,\kappa')
\Bigr]\notag\\
&\quad- n\,\psi\,\frac{(x+e)^{1+\varphi}}{1+\varphi}
\label{eq:app_household_problem}\\
\text{subject to}\quad&
c_{nr}=wh\ell+T-n\bigl[x+r\bigl(F+C(e)\bigr)\bigr]>0,\notag\\
&0\leq\ell<1-n\lambda,\quad x\geq0,\notag\\
&e=0\ \text{if }r=0,\qquad e\geq0\ \text{if }r=1.
\notag
\end{align}
Here $y_0'(x,\kappa')=wh'(x,\kappa')$ is child earnings outside the preferred
destination, and
\[
\Delta u(x,\kappa')
=u_c\!\bigl(y_0'(x,\kappa')\exp(\delta_E)\bigr)
-u_c\!\bigl(y_0'(x,\kappa')\bigr)+B_E
\]
is the utility gain from assignment. Participation determines access to
positional effort, not assignment eligibility. A non-entrant sets $e=0$ but
still competes through productive human capital.

Let $\pi_{r|n}(s)$ denote the probability of participation state $r$ conditional
on parity. Type-I extreme-value participation shocks with scale $\sigma_s$ give
\begin{align}
\pi_{r|n}(s)
&=
\frac{\exp\{V_{nr}(s;v,T)/\sigma_s\}}
{\sum_{j=0}^{1}\exp\{V_{nj}(s;v,T)/\sigma_s\}},
\label{eq:app_entry_logit}\\
\overline U_n(s;v,T)
&=\sigma_s\log\!\left[
\sum_{r=0}^{1}\exp\{V_{nr}(s;v,T)/\sigma_s\}
\right].
\label{eq:app_entry_inclusive}
\end{align}
For $n=0$, only $r=0$ is available, education and effort are zero, and the
household chooses labor supply. Thus $\overline U_0=V_{00}$ and
$\pi_{0|0}=1$. The parity logit in equation~\eqref{eq:fertility_logit}
uses these inclusive values. Define
$\pi_{nr}(s)=\Pr(n|s)\pi_{r|n}(s)$ as the joint probability of parity and
participation.

Child income utility is $u_c(y)=(y^{1-\eta_c}-1)/(1-\eta_c)$ with
$\eta_c=2$; the logarithmic form in Section~\ref{sec:analytical} is its
$\eta_c\to1$ limit. Under CRRA utility, the wage component of $\Delta u$ varies
with child income, while $B_E$ is additive. Only $\delta_E$ enters measured
earnings and the next-generation income state.

The labor choice is bounded away from infeasible consumption and leisure. At an
interior solution its first-order condition is
\begin{equation}
\frac{wh}{c_{nr}}
=\nu(1-\ell-n\lambda)^{-\gamma}.
\label{eq:app_labor_foc}
\end{equation}
Conditional on $(n,r,x,e)$, labor supply is obtained from its first-order
condition whenever the implied choice is feasible; otherwise the household
chooses the feasible boundary.

\subsection{Ability and Human-Capital Transition}
\label{app:ability_transition}

Conditional on the parental family state $\kappa_p$, child ability follows
\[
\log\kappa'
=\rho_\kappa\log\kappa_p+\eta',
\qquad \eta'\sim N(0,\sigma_\kappa^2).
\]
The Tauchen grid contains $N_\kappa$ evenly spaced log points spanning
$\pm 2.5\sigma_\kappa/\sqrt{1-\rho_\kappa^2}$. For each parental state, an
interior child-ability point receives the normal probability between adjacent
grid midpoints. The first and last points also collect the lower and upper
tails, so every row of $\Pi_\kappa$ sums to one.
Given child ability $\kappa'$, productive human capital is
\[
h'(x,\kappa')=A_h\kappa'(\theta_0+x^{\alpha_1}).
\]
$A_h$ is normalized to one. The baseline component $\theta_0$ captures publicly
provided education and other human capital produced without private-education spending.

Conditional on state $s$, parity $n$, participation $r$, and ability $\kappa'$,
the destination indicator satisfies
\[
\Pr(D_E=1\mid s,n,r,\kappa')
=P_E\!\left(S(h'(x_{nr}(s),\kappa'),e_{nr}(s));v\right).
\]
The next adult state is
\begin{equation}
\widetilde h'
=h'(x_{nr}(s),\kappa')\exp(\delta_E D_E),
\qquad \kappa_p'=\kappa'.
\label{eq:app_next_income_mixture}
\end{equation}
Values of $\widetilde h'$ between earnings-grid points are assigned linearly to
the two adjacent states; values outside the grid are assigned to the nearest
endpoint. This preserves mass and makes the transition probabilities continuous
in policy-induced changes in $\widetilde h'$.

\subsection{Assignment and Capacity}
\label{app:assignment_capacity}

For each child ability realization, the score is
$S=(h')^{\theta_h}(1+e)$
and the assignment probability is
$P_E(S;v)
=\Lambda\!\left(\frac{S-v}{\sigma_s v}\right).$
Scaling the score gap by $v$ makes $\sigma_s$ a unit-free measure of assignment
noise. For each candidate cutoff $v$, I solve the household policies and their implied
stationary distribution, then compute the capacity residual
\begin{align}
\mathcal C(v)
=
\frac{
\sum_s\mu(s)\sum_{n=1}^3\sum_{r=0}^1\pi_{nr}(s)n
\sum_{\kappa'}\Pi_\kappa(\kappa'|\kappa_p)
P_E\!\left(S(h'(x_{nr}(s),\kappa'),e_{nr}(s));v\right)}
{\sum_s\mu(s)\sum_{n=1}^3\sum_{r=0}^1\pi_{nr}(s)n}
-q_E.
\label{eq:app_excess_destination}
\end{align}
The notation suppresses the dependence of $\mu$, $\pi_{nr}$, $x_{nr}$, and
$e_{nr}$ on $v$. The equilibrium cutoff solves $\mathcal C(v)=0$. Because each
child competes separately, the capacity condition weights each household by its
number of children: a family with $n$ children contributes $n$ applicants to
the pool. I locate a sign-changing bracket and solve the root with Brent's
method.

\subsection{Reproduction Matrix and Stationarity}
\label{app:reproduction_matrix}

Let $i=(h_i,\kappa_{p,i})$ and $j=(h_j,\kappa_{p,j})$ index current and
next-generation states. The entry $R_{ij}$ is the expected number of children
produced in state $i$ who enter state $j$. It averages over parity choice,
child ability, destination assignment, and interpolation on the $h$ grid:
\begin{align}
R_{ij}
=\sum_{n=1}^3\sum_{r=0}^1\pi_{nr}(i)n
\sum_{\kappa'}\Pi_\kappa(\kappa'|\kappa_{p,i})
\sum_{d\in\{0,1\}}
\Pr(d|i,n,r,\kappa')\,
\omega_j(\widetilde h'_{inr}(d,\kappa'),\kappa').
\label{eq:app_reproduction_matrix}
\end{align}
Here
$\widetilde h'_{inr}(d,\kappa')
=h'(x_{nr}(i),\kappa')\exp(\delta_Ed)$. The weight $\omega_j$ assigns
$\widetilde h'$ linearly to the two adjacent points on the $h$ grid while
setting $\kappa_{p,j}=\kappa'$. Destination assignment is Bernoulli:
\[
\begin{aligned}
\Pr(d=1|i,n,r,\kappa')
&=P_E\!\left(S(h'(x_{nr}(i),\kappa'),e_{nr}(i));v\right),\\
\Pr(d=0|i,n,r,\kappa')
&=1-\Pr(d=1|i,n,r,\kappa').
\end{aligned}
\]
Because the ability, assignment, and interpolation probabilities each sum to
one, $\sum_jR_{ij}=\sum_{n=1}^3\sum_{r=0}^1\pi_{nr}(i)n.$
Rows of $R$ therefore sum to expected births from the parental state, not to
one.
The vector $\mu$ gives the stable distribution of adults across states.
Applying $R$ to this distribution produces the next generation. In a stationary
composition, the next generation has the same state shares:
\[
\mu R=\varrho\mu,\qquad \sum_i\mu_i=1.
\]
Thus $\mu$ is the normalized nonnegative left eigenvector of $R$ associated
with the Perron root $\varrho$. Because $\mu$ sums to one, $\varrho$ equals
births per adult and is the factor by which cohort size changes from one
generation to the next.

\subsection{Equilibrium Algorithm}
\label{app:equilibrium_algorithm}

Given a trial cutoff $v$, I solve the conditional household problem in every
state and discrete branch, combine the solutions using the participation and
parity logits, and construct $R(v)$. Its normalized left Perron eigenvector gives
the stationary composition $\mu(v)$. Using these policies and this distribution,
I evaluate the capacity residual $\mathcal C(v)$ in
equation~\eqref{eq:app_excess_destination}. I first locate a sign-changing
bracket and then use Brent's method to solve $\mathcal C(v)=0$. Thus every trial
cutoff generates its own household policies and stationary distribution; the
benchmark does not use a damped fixed-point iteration over the cutoff and
distribution.
A numerical solution is accepted only if the capacity and normalized Perron
residuals are below their tolerances and consumption and leisure are positive
in every branch with positive probability. As consistency checks, I verify that
the nested-logit probabilities, each row of $\Pi_\kappa$, the interpolation
weights, and $\mu$ sum to one, and that each row of $R$ sums to expected births.

\subsection{Lottery Equilibrium}
\label{app:lottery_equilibrium}

Under the lottery, I replace $P_E(S;v)$ with $q_E$ in both expected utility and
the next-generation earnings transition. The cutoff and contest-participation
branch disappear, so households pay no entry fee and choose $e=0$, while
productive education remains valuable because it raises human capital. The
resulting policies define a new reproduction matrix, whose normalized left
Perron eigenvector gives the lottery's stationary distribution.

\subsection{Qualified Lottery}
\label{app:qualified_lottery}

For a qualified share $m\in[q_E,1]$, let
\[
Q(S;v_m)=\Lambda\!\left(\frac{S-v_m}{\sigma_s v_m}\right)
\]
denote the probability of clearing the qualification threshold. The cutoff
$v_m$ solves the capacity condition in
equation~\eqref{eq:app_excess_destination} with $P_E$ replaced by $Q$ and
$q_E$ replaced by $m$. Conditional on qualifying, each child receives the
destination with probability $q_E/m$, so the unconditional assignment
probability is
\begin{equation}
P_E^Q(S;v_m,m)=\frac{q_E}{m}Q(S;v_m).
\label{eq:app_qualified_assignment}
\end{equation}
I replace $P_E$ with $P_E^Q$ in the household problem and reproduction matrix
and solve the cutoff, policies, and stationary distribution jointly. At
$m=q_E$, the rule reproduces score-based assignment. At $m=1$, the threshold
is immaterial and the economy coincides with the full lottery above.

\subsection{Policy Equilibria}
\label{app:policy_equilibria}

For the separate tax and transfer experiments, I treat the lump-sum transfer
$T$ as an equilibrium object. Given a trial $T$, I solve household policies,
the capacity-clearing cutoff under score-based assignment, and the Perron
stationary distribution. I then vary $T$ until the government budget in
equation~\eqref{eq:government_budget} balances. A positive $T$ rebates net
revenue uniformly; a negative $T$ is a uniform lump-sum assessment.
The revenue-funded package instead fixes the broad tax at 20 percent and sets
$T=0$. Given a trial per-child transfer $b$, I re-solve the same equilibrium and
vary $b$ until aggregate transfer spending equals education-tax revenue. Thus
all proceeds from the tax are paid to families in proportion to their number of
children.

\section{Calibration Moments and Computation}
\label{app:calibration_computation}

This appendix maps the data moments into the model, states the calibration
criterion, and reports the numerical checks. Only benchmark moments determine
the parameter vector. The lottery and all other policies are solved afterward.

\subsection{Model Analogues of the Household Moments}
\label{app:model_moments}

Let $\mu_i$ denote stationary mass on adult state $i$, and let
$\pi_{nr}(i)$ be the joint probability of parity $n$ and participation state
$r\in\{0,1\}$. Labor income in that cell is
$y_{inr}=wh_i\ell_{nr}(i)$. Aggregate parity shares are
\begin{equation}
m_{p_n}=\sum_i\mu_i\sum_{r=0}^1\pi_{nr}(i),
\qquad n=0,1,2,3.
\label{eq:app_parity_moments}
\end{equation}
The model targets $m_{p_1},m_{p_2},m_{p_3}$; childlessness follows from adding-up.
Average completed fertility is
\begin{equation}
\overline n=\sum_i\mu_i\sum_{n=0}^3\sum_{r=0}^1n\pi_{nr}(i).
\label{eq:app_average_fertility}
\end{equation}

The income Gini is computed from the stationary distribution of $y_{inr}$ using
the same sorted cumulative-share formula as in KLIPS. Average labor supply is
\begin{equation}
m_\ell=\sum_i\mu_i\sum_n\sum_r\pi_{nr}(i)\ell_{nr}(i).
\label{eq:app_hours_moment}
\end{equation}
The fertility--income elasticity is the slope of log mean fertility on log mean
income across five equal-mass income quintiles, matching the data construction.
Bottom-quintile childlessness uses the same partition.

Let
$d_{inr}=x_{nr}(i)+r\{F+C(e_{nr}(i))\}$ denote private-education spending per
child. Because the model has no child-age dimension, the analogue of the
lifecycle spending target is the child-weighted mean of per-child spending as
a share of household income:
\begin{equation}
m_x
=\frac{
\sum_i\mu_i\sum_{n=1}^3\sum_{r=0}^1
\pi_{nr}(i)n\,d_{inr}/y_{inr}}
{\sum_i\mu_i\sum_{n=1}^3\sum_{r=0}^1\pi_{nr}(i)n}.
\label{eq:app_education_ratio}
\end{equation}
Both inputs enter because households finance productive education and positional
effort from the same budget. The 6.4 percent target fixes their combined resource
cost. The reported income elasticity of private-education spending is the slope
of log mean spending per child on log mean income across permanent-income
quintiles of parent households in the stationary distribution. It is reported
as an untargeted check because it draws on the same spending--income variation
as the targeted burden gradient.

Intergenerational income persistence is computed from the reproduction
matrix. Let $Y_p$ be log parental income and $Y_c$ be log child adult income. Then
\begin{equation}
m_{\mathrm{IGE}}
=\frac{\Cov(Y_p,Y_c)}{\Var(Y_p)}.
\label{eq:app_ige}
\end{equation}
The covariance is computed across all parent--child pairs generated in
equilibrium. Each pair is weighted by the stationary mass of the parental state,
the joint probability of fertility and contest participation, the number of
children, the probability of the child's ability realization, and the
destination-assignment probability.

\subsection{Joint Parameter Selection}
\label{app:joint_parameter_selection}

The vector of internally calibrated parameters has sixteen elements,
\begin{equation}
\Theta
=
(\phi_1,\phi_2,\phi_3,\beta,\sigma,\nu,
\sigma_\kappa,\rho_\kappa,B_E,c_e,\sigma_s,
\theta_h,\alpha_1,\theta_0,\psi,F).
\label{eq:app_parameter_vector}
\end{equation}
The parameters fall into three blocks. The preference block contains the parity
preference levels $(\phi_1,\phi_2,\phi_3)$, the weight on child outcomes $\beta$,
fertility-choice dispersion $\sigma$, and the leisure weight $\nu$. The ability
block $(\sigma_\kappa,\rho_\kappa)$ governs dispersion and intergenerational
persistence. The remaining block covers destination value, the contest, and
human-capital production: $B_E$ is destination value beyond current wages,
$c_e$ scales positional costs, $\sigma_s$ controls both participation dispersion
and assignment noise, $\theta_h$ loads human capital into the admission score,
$(\alpha_1,\theta_0)$ govern human-capital production, $\psi$ prices child study
time, and $F$ is the contest entry fee. All sixteen parameters are selected
jointly. Since the model moments depend jointly on $\Theta$, there is no
one-to-one link between individual parameters and moments. Some moments are
nevertheless more informative for particular parameters, as described below,
even though there is no formal identification procedure. The fixed child-income
curvature $\eta_c=2$ lies outside $\Theta$ and
matches the leisure curvature $\gamma=2$. The three parity levels are optimized
in levels, all strictly positive parameters are optimized in logs, and
$\rho_\kappa$ is bounded between 0.01 and 0.99.

The calibration uses ten point targets and two interval restrictions. The point
targets are the shares with one, two, and three or more children, the income
Gini, the work-time share, the fertility--income elasticity, bottom-quintile
childlessness, the private-education spending-to-income ratio,
intergenerational income persistence, and the private-education burden
gradient. Each point-target residual is the model's proportional deviation from
its empirical target:
\begin{equation}
r_j(\Theta)
=\frac{m_j(\Theta)-\widehat m_j}{\widehat m_j},
\qquad j=1,\ldots,10 .
\label{eq:app_resid_relative}
\end{equation}
The parity shares are especially informative about the parity preference levels,
the weight on child outcomes, and fertility-choice dispersion. Work hours are
most informative about the leisure weight, while the income Gini and
intergenerational persistence discipline ability dispersion and persistence.
The fertility--income elasticity and bottom-quintile childlessness further
restrict preference heterogeneity. The spending level and burden gradient,
together with the KEEP restrictions, jointly discipline the human-capital and
contest parameters. These links are informative rather than exclusive because
every parameter affects equilibrium choices and the stationary distribution.

The two KEEP point estimates are close to zero. Because their 95 percent
confidence intervals are wide, they enter as interval restrictions rather than
exact targets. The estimated wage return to private-education spending is
$\widehat b_w=0.0017$ with a standard error of $0.0125$, and the estimated logit
slope of preferred-destination entry with respect to log private-education
spending is $\widehat b_g=0.027$ with a standard error of $0.059$. The
calibration therefore assigns zero loss to model values inside the corresponding
95 percent confidence intervals.

Both model analogues depend on the productive-return term
$\alpha_1s$, where
$s=x^{\alpha_1}/(\theta_0+x^{\alpha_1})$. For the wage-return analogue,
productive education raises earnings directly through human capital and
indirectly by improving destination assignment. For this aggregate analogue, I
evaluate the probability term at the admission share $q_E$ and the score ratio
$S/v$ at its cutoff value of one. Because assignment adds $\delta_E$ to log
income, the indirect component is
$\delta_E q_E(1-q_E)(\theta_h/\sigma_s)\alpha_1s$, where $q_E(1-q_E)$ converts
the change in the logit index into a change in assignment probability. I average
the direct and indirect components over the stationary parent distribution. The
indirect component is $0.0006$ at the benchmark.

The destination-entry analogue instead evaluates each household at its actual
score and is proportional to
$(S/v)(\theta_h/\sigma_s)\alpha_1s$. At the benchmark,
$\theta_h/\sigma_s=1.01$ and mean $S/v$ is $0.61$. The wage restriction
primarily disciplines the productive return to education, while the entry
restriction additionally disciplines the loading of human capital in the
assignment score.

KEEP estimates the wage return to total private-education spending among
employed 29-year-olds, whereas the model analogue is the stationary average wage
response to productive education $x$. To place them on the same scale, I map a
1 percent increase in total private-education spending to a 1 percent increase
in $x$. Because some observed spending is positional and does not build human
capital, this mapping gives an upper bound on the human-capital component of the
return.

Index the two KEEP restrictions by $j\in\{11,12\}$. Let $b_j(\Theta)$ denote the
corresponding model analogue and let $\widehat b_j$ and
$\operatorname{se}_j$ denote the empirical estimate and its standard error.
Define $t_j(\Theta)=[b_j(\Theta)-\widehat b_j]/\operatorname{se}_j$ and
\begin{equation}
r_j(\Theta)
=
\begin{cases}
0, & |t_j(\Theta)|\le 1.96,\\[2pt]
\bigl(|t_j(\Theta)|-1.96\bigr)\,\operatorname{sign}\{t_j(\Theta)\}, & \text{otherwise,}
\end{cases}
\qquad j=11,12.
\label{eq:app_resid_interval}
\end{equation}
Each residual is zero inside its 95 percent confidence interval and equals the
standardized distance from the nearest endpoint outside. The minimum-distance
criterion is $r(\Theta)'r(\Theta)$, combining the ten relative point-target
residuals with these two interval residuals.

\subsection{Measurement of the Positional Share}
\label{app:positional_share_bounds}

The positional share is the fraction of education spending that raises admission
chances without raising human capital. In the 2017 and 2018 Private Education 
Expenditures Surveys, admission preparation, anxiety, and advance study
account for 0.299 of general-subject spending. This is the narrow lower bound on 
positional spending.
Another 0.66 is reported as supplementing or deepening school lessons and may
build human capital, protect rank, or do both. At the upper end, the only spending clearly
unrelated to the race is supervision or sociality, which accounts for 0.032 and
0.033 of spending. The resulting empirical range is [0.30, 0.97].
The calibration does not target either endpoint. The calibrated model produces a benchmark share
of 0.773.

\subsection{Numerical Procedure and Checks}
\label{app:optimization_sequence}
\label{app:grid_checks}

I minimize the calibration criterion with a box-constrained, derivative-free
search that solves a stationary equilibrium at each candidate parameter vector.
The parameter vector with the lowest criterion value obtained in the numerical
search is the benchmark. The benchmark and all counterfactuals use the same grid and
convergence criteria. The final grid has 48 earnings states and 17 ability
states. The capacity cutoff and stationary distribution are solved to
a tolerance of $5\times10^{-4}$. The equilibrium capacity residual is
$4.4\times10^{-6}$, and neither productive education nor positional effort
reaches its numerical upper bound in any positive-mass state. All calibrated
parameters lie inside their search bounds; $\alpha_1$ is closest, at 15.1
percent of the transformed interval above its lower bound.

\subsection{Parameter Sensitivity}
\label{app:handset_sensitivity}

The lottery effect is insensitive to the positional-cost curvature and the
numerical bounds. Varying the positional-cost curvature from 0.75 to 3.0 changes
the fertility effect only from 0.243 to 0.238, while doubling the upper bound on
education spending leaves it at 0.240. Doubling the upper support of the earnings
grid eliminates the 2.86 percent mass at its upper endpoint and also leaves the
effect at 0.240. The calibration fit is nearly flat as the child-income curvature
varies from 0.9 to 4.4 because $B_E$ adjusts in the opposite direction, so I set
$\eta_c=2$, matching the leisure curvature. The child-time curvature matters
more: doubling it lowers the fertility effect from 0.240 to 0.169, which remains
more than five times the effect of the five-percent pronatal transfer.

\begingroup
\setlength{\intextsep}{6pt plus 1pt minus 1pt}

\section{Capacity and Assignment}
\label{app:capacity_results}

Table~\ref{tab:app_capacity} compares capacity changes under score-based
assignment with the same changes combined with random assignment. Within each
$\eta$ block, the first row reports capacity alone and the second adds the
lottery. For every reported $q_E$ and $\eta$, random assignment raises
fertility relative to score-based assignment at the same capacity. This
additional gain becomes small as $q_E$ approaches one because near-universal
access leaves little incentive to compete for rank under either assignment rule.

\begin{table}[H]
\centering\footnotesize
\caption{Capacity alone and capacity with random assignment}
\label{tab:app_capacity}
\begin{threeparttable}
\begin{tabular}{lrrrrrrr}
\toprule
$\eta$ & \multicolumn{7}{c}{$q_E$} \\
\cmidrule(lr){2-8}
 & 0.1 & 0.184 & 0.25 & 0.35 & 0.5 & 0.7 & 0.95 \\
\midrule
0.00, capacity alone & -0.048 & -0.000 & +0.036 & +0.105 & +0.269 & +0.648 & +1.093 \\
\quad + lottery & +0.037 & +0.239 & +0.380 & +0.571 & +0.808 & +1.009 & +1.112 \\
0.25, capacity alone & -0.028 & -0.000 & +0.025 & +0.077 & +0.201 & +0.483 & +0.920 \\
\quad + lottery & +0.078 & +0.239 & +0.343 & +0.478 & +0.645 & +0.817 & +0.962 \\
0.50, capacity alone & -0.005 & -0.000 & +0.014 & +0.050 & +0.142 & +0.342 & +0.667 \\
\quad + lottery & +0.124 & +0.239 & +0.308 & +0.393 & +0.495 & +0.603 & +0.710 \\
0.75, capacity alone & +0.019 & -0.000 & +0.004 & +0.025 & +0.089 & +0.224 & +0.448 \\
\quad + lottery & +0.177 & +0.240 & +0.275 & +0.317 & +0.366 & +0.418 & +0.469 \\
1.00, capacity alone & +0.046 & +0.000 & -0.007 & +0.003 & +0.043 & +0.129 & +0.275 \\
\quad + lottery & +0.235 & +0.240 & +0.244 & +0.249 & +0.258 & +0.269 & +0.283 \\
\bottomrule
\end{tabular}
\end{threeparttable}
\begin{fullwidthtablenotes}
Entries are changes in completed fertility from the benchmark. The $q_E=0.10$
column is a capacity contraction.
\end{fullwidthtablenotes}\end{table}

\endgroup

\section{Welfare: Criterion, Metrics, and Additional Results}
\label{app:welfare}

This appendix develops the welfare analysis in Section~\ref{sec:welfare}.
It defines the criterion, decomposes the baseline consumption-based measure,
reports the additional child-time effect, derives the transition law, and examines
sensitivity to destination value and capacity. Unless stated otherwise, the
admission share is held fixed as cohort size changes.

\subsection{Welfare Criterion}

The criterion asks whether an adult would prefer the reform before learning
income and parental background. For state $s=(h,\kappa_p)$, let
$\mathcal R_0=\{0\}$ and $\mathcal R_n=\{0,1\}$ for $n\geq1$. The two nested
inclusive values are
\begin{align}
\overline U_n(s)
&=\sigma_s\log\!\sum_{r\in\mathcal R_n}
  \exp\!\left(\frac{V_{nr}(s)}{\sigma_s}\right),\notag\\
W
&=\int \sigma\log\!\sum_{n=0}^{3}
  \exp\!\left(\frac{\overline U_n(s)}{\sigma}\right)\dd\mu(s),
\label{eq:welfare_criterion}
\end{align}
where $V_{nr}$ is conditional utility at the optimal continuous choices,
$\sigma_s$ is the participation-shock scale, and $\sigma$ is the
fertility-shock scale. The integral averages over the stationary adult
distribution. Because log consumption enters with coefficient one, a uniform
proportional consumption compensation $\lambda$ shifts every conditional value
by $\log(1+\lambda)$. The exact consumption equivalent is therefore
$\mathrm{CE}=\exp(\Delta W)-1$. All welfare measures are per adult within a calibrated cohort economy. Levels
are not compared across cohorts because their calibrated taste dispersions
differ.

\subsection{Baseline Welfare and the Child-Time Effect}

The baseline welfare measure is the partial-accounting consumption equivalent
that excludes the child-time term. I add back each regime's expected child-time
cost at that regime's equilibrium choices before comparing welfare because the
two education inputs have no estimated conversion into market hours.
Allocations and the stationary distribution still respond to the full utility
function, so this is not the welfare of a model re-solved without the
child-time term.

At a fixed admission share, the lottery raises baseline welfare by
\vCEexTime\ percent of consumption. Table~\ref{tab:welfare_decomp} decomposes
this gain. Higher fertility increases the expected nonwage value of children
reaching preferred destinations because the admission probability remains
$q_E$ per child. The change in direct family-size utility and the logit choice
surplus offset part of that gain. The residual combines consumption, leisure,
and the wage component of child outcomes; it includes the resources released
from entry fees and positional outlays but is not a pure budget term.

\begin{table}[htbp]
\centering\footnotesize
\caption{Baseline welfare and the additional child-time gain}
\label{tab:welfare_decomp}
\begin{threeparttable}
\begin{tabular}{lr}
\toprule
Component & Lottery $-$ contest \\
\midrule
\multicolumn{2}{l}{\emph{Baseline welfare measure}} \\
Nonwage destination-value flow & $+0.427$ \\
Change in direct family-size utility & $-0.255$ \\
Logit choice-surplus change & $-0.226$ \\
Other household-utility terms (residual) & $+0.205$ \\
\cmidrule(lr){1-2}
Baseline $\Delta W$ & $+0.151$ \\
Consumption equivalent (\%) & $+16.3$ \\
\addlinespace
\multicolumn{2}{l}{\emph{Additional child-time effect}} \\
Child-time cost released & $+1.664$ \\
Full-model $\Delta W$ & $+1.815$ \\
\bottomrule
\end{tabular}
\end{threeparttable}
\begin{fullwidthtablenotes}
The first four rows are changes in model utility and add to baseline
$\Delta W$; the consumption equivalent is $\exp(\Delta W)-1$.
The nonwage destination-value row rises with the expected number of
children reaching preferred destinations at a fixed per-child admission
probability. The residual includes the resource release from lower entry
fees and positional outlays. The child-time row is then added to obtain
full-model $\Delta W$. It is reported separately because the two education
inputs have no estimated conversion into market hours.
\end{fullwidthtablenotes}\end{table}

The baseline measure includes the two logit inclusive values. Weighting
conditional utility by the choice probabilities instead, without the logit
surplus, raises the consumption equivalent from \vCEexTime\ to
\vCEexTimeNoShock\ percent but leaves its sign unchanged.

The model also assigns the child-time cost
$n\psi(x+e)^{1+\varphi}/(1+\varphi)$ to productive education and positional
preparation. The lottery releases \vChildTimeRelease\ additional utility units
by nearly eliminating this cost, bringing the full-model increase to
\vDWfull\ utility units. The child-time release accounts for
\vChildTimeShare\ percent of that increase. In the model, the largest
full-utility benefit therefore comes from children no longer spending time on
positional preparation. This is a modeled reduction in children's time burden,
not a direct estimate of their subjective well-being.

Although excluded from the baseline accounting measure, the child-time term
continues to affect equilibrium choices. Holding the calibrated parameter
vector fixed, doubling its curvature changes the baseline lottery gain from
\vCEexTime\ to \vCEctcHi\ percent, with no change in sign. At half curvature,
the equilibrium does not converge.

\subsection{Generation Path}

The transition starts from the benchmark stationary adult distribution. These
adults were themselves raised under the contest but make fertility and
education-investment choices for their children under the lottery. Because
lottery policies do not depend on an equilibrium cutoff or fiscal transfer, the
adult distribution evolves as $\mu_{t+1}\propto\mu_tR$, where $R$ is the lottery
reproduction matrix defined in Appendix~\ref{app:reproduction_matrix}.

\subsection{Sensitivity to Destination Value and Capacity}

\emph{Destination value lost under random assignment.} The main lottery
experiment assumes that the measured wage premium survives random assignment.
If the entire premium instead reflects score-based matching and disappears
under the lottery, the baseline welfare gain falls from \vCEexTime\ to
\vCEphiOne\ percent, approximately zero. In a broader sensitivity check, I
reduce the wage and nonwage components of destination value proportionally
under the lottery. The gain disappears when both components fall by
\vPhiTotal\ percent. This low threshold reflects the fact that the baseline
gain is the net of the offsetting welfare changes reported in
Table~\ref{tab:welfare_decomp}.

\emph{Fixed number of positions.} The main lottery holds the admission share
$q_E$ fixed as fertility changes, so the number of preferred positions grows
with cohort size. If the number of positions is held fixed instead, the
steady-state admission probability falls from 18.45 to 16.65 percent.
Fertility still rises by 0.199 children, but the baseline welfare change is
\vCEabsSlots\ percent. The positive baseline welfare result therefore depends
on maintaining the admission share, whereas the fertility response remains
positive under the fixed-number convention.

\end{document}